\documentclass[letterpaper, 10 pt, conference]{ieeeconf}

\IEEEoverridecommandlockouts                              % This command is only needed if 
\usepackage{amsmath} 
\usepackage{amssymb}  
\usepackage{mathbbol}
\usepackage{cite}
\usepackage{wrapfig}
\usepackage{mathrsfs}
\usepackage{url}
\usepackage{array}
\usepackage{amsmath,amsfonts, amssymb,color}
\usepackage{graphicx}
\usepackage{tabularx}
\usepackage{upgreek}
\usepackage[OT1]{fontenc}

\newtheorem{theorem}{Theorem}
\newtheorem{corollary}{Corollary}[theorem]
\newtheorem{lemma}{Lemma}
\newtheorem{definition}{Definition}

\newtheorem{proposition}{Proposition}

\newtheorem{remark}{Remark}

\usepackage{graphics}
\usepackage{graphicx}
\usepackage{tabularx}
\usepackage{array}
\usepackage{epstopdf}
\usepackage{lipsum}
\usepackage{tablefootnote}
\usepackage{subfigure}

\newcommand{\zt}{\textbf{z}}
\newcommand{\St}{\texttt{S}}
\newcommand{\It}{\texttt{I}}
\newcommand{\ct}{\mathcal{C}}
\newcommand{\dt}{\mathcal{D}}
\newcommand{\ut}{\mathcal{U}}

\newcommand{\ignore}[1]{}

\title{\LARGE \bf
SIS Epidemic Containment under Game-theoretic Rational Learning
\author{Urmee Maitra$^{1}$, Keith Paarporn$^{2}$ and Ceyhun Eksin$^{1}$% <-this % stops a space
}
\thanks{$^{1}$U. Maitra and C. Eksin are with the Department of Industrial and Systems Engineering at Texas A \& M University, USA. Email: {\tt\small \{umaitra1309, eksinc\}@tamu.edu.}}%
\thanks{$^{2}$K. Paarporn is with the Department of Computer Science at the University
of Colorado, Colorado Springs, USA. Email: {\tt\small kpaarpor@uccs.edu.}
This work is supported by ONR grant \#N000142612120.}%
}

\allowdisplaybreaks
\begin{document}

\maketitle
\thispagestyle{empty} 

%%%%%%%%%%%%%%%%%%%%%%%%%%%%%%%%%%%%%%%%%%%%%%%%%%%%%%%%%%%%%%%%%%%%%%%%%%%%%%%%
\begin{abstract}
We consider a susceptible-infected-susceptible (SIS) epidemiological model in which individuals dynamically decide whether to follow social distancing or not. Driven by game-theoretic payoffs, their actions influence the spread of infection. Much of the existing work in this area considers individuals that are imitative learners, leading to the well-studied replicator equation. In this paper, we investigate how the dynamical outcomes may fundamentally shift when individuals are instead equipped with logit learning rules, i.e., when individuals possess bounded rationality. We analyze the local asymptotic stability of equilibria that exist in different parameter regimes of the coupled dynamical system. We show that in regimes where the disease-free equilibrium is unstable under imitative learning and sufficiently high-rationality logit learning, sufficiently low rationality can avoid such an undesirable outcome by stabilizing the disease-free equilibrium. We also identify parameter regimes in which an unstable endemic equilibrium stabilizes when rationality falls below a threshold. We numerically illustrate the convergence of infection state and response trajectories to a stable disease-free equilibrium at a lower degree of rationality, whereas the same trajectories otherwise converge to a stable endemic equilibrium as the rationality level rises. We also compare a specific case in which trajectories under replicator dynamics fail to converge, and illustrate convergence when rationality is below an analytically derived threshold.
\end{abstract}

%\begin{IEEEkeywords}
%Game theory, Agents-based systems, Decentralized control
%\end{IEEEkeywords}

%%%%%%%%%%%%%%%%%%%%%%%%%%%%%%%%%%%%%%%%%%%%%%%%%%%%%%%%%%%%%%%%%%%%%%%%%%%%%%%%
\section{Introduction}\label{section:introduction}

Containing epidemic spread is a challenging problem. In order to curb infection spread, authorities announce several health recommendations for the public. Whether individuals comply with these recommendations depends on the infection level, and the collective compliance of the population, in turn determines how the infection spreads. Containment of infection spread is a complicated problem primarily due to two reasons. First, epidemic propagation involves a large interacting population that is difficult to model. Second, population behavior is instrumental in shaping the trajectory of infection spread, leading to a coupled dynamical system. Analyzing the system and characterizing equilibria of this coupled system is non-trivial.

Researchers used notions from population games \cite{sandholm2010population} to investigate the coupled epidemiological-behavioral model \cite{reluga2010game,liu2022herd}. In \cite{kabir2020evolutionary}, \cite{ye2021game}, the authors analyzed the coupled epidemiological system using game theory. The authors in \cite{chang2020game} and \cite{huang2022game} provided comprehensive reviews in this field. Prior works have investigated the effects of non-pharmaceutical interventions \cite{wei2023adoption, maitra2023sis}, including information design \cite{maitra2026discrete}, and vaccinations \cite{hota2019game,hota2023learning} on the spread of pandemics through the lens of game theory, and have proposed solutions to control the spread \cite{watkins2017inference,kohler2018dynamic,watkins2019robust,nowzari2015optimal, pare2015stability}.

In evolutionary dynamics, revision protocols dictate the mechanisms through which an individual's strategy evolves \cite{sandholm2010population}. Various dynamics, including but not limited to replicator, Smith and Brown-von Neumann-Nash dynamics, define the learning protocols that model the process of updating strategies \cite{hofbauer2003evolutionary}. Most prior works have modeled individuals as fully rational agents who act selfishly during pandemics \cite{9992844}, and used simple imitation as learning dynamics, \cite{paarporn2023sis,fu2010imitation,satapathi2022epidemic}. Although few works, such as \cite{11499399,10933512} have studied how logit learning averts undesirable outcomes in general feedback-evolving games, modeling individuals to possess bounded rationality during epidemics remains relatively underexplored, with \cite{maitra2023sis,hota2023learning,park2025epidemic,ye2021game} being departures. However, these papers focused neither on the explicit SIS disease-free and endemic transitions as rationality changed, nor compared with other learning rules.

 To bridge the existing gap, we consider \emph{logit choice} \cite{hofbauer2007evolution} as the learning rule to capture the evolution of population behavior, and examine the effect of rationality on infection spread. In this paper, we address the following question: \emph{can bounded rationality stabilize disease-free and endemic equilibria that are otherwise unstable under replicator dynamics?}
  We characterize the stationary equilibria of the coupled system and establish that rationality affects their stability properties. Our principal result identifies a regime in which the disease-free equilibrium is necessarily unstable under replicator and high rationality logit dynamics but becomes locally asymptotically stable under logit learning when rationality falls below a threshold (Theorem \ref{thm:dfe_stab}). We further show that bounded rationality can alter the asymptotic behavior of the coupled system by illustrating a parameter regime in which replicator dynamics exhibit sustained oscillations, whereas logit dynamics with sufficiently low rationality converge to a stable endemic equilibrium.
 
The rest of the paper is organized as follows. In Section \ref{sec:epi_model}, we introduce the coupled epidemiological and logit response dynamics. In this section, we also discuss the action space, and payoffs that individuals receive on choosing each action. Section \ref{secchar_slre} characterizes the stationary equilibrium under logit (bounded and complete rationality) and replicator learning rules. Section \ref{sec:exist_unique} establishes the existence and uniqueness of various stationary logit equilibria. We include the main results on local asymptotic stability of different equilibria in Section \ref{sec:stab&sigma_impact}, by analyzing the Jacobian matrix. We validate our analytical results in Section \ref{sec:num_sim} by illustrating (failure of) convergence to the steady-state equilibrium. Finally, we conclude the paper in Section \ref{sec:conc}, and include some promising directions for future research. 

%%%%%%%%%%%%%%%%%%%%%%%%%%%%%%%%%%%%%%%%%%%%%%%%%%%%%%%%%%%%%%%%%%%%%%%%%%%%%%%%
%%%%%%%%%%%%%%%%%%%%%%%%%%%%%%%%%%%%%%%%%%%%%%%%%%%%%%%%%%%%%%%%%%%%%%%%%%%%%%%%
\section{Coupled Epidemic and Behavioral Dynamics}
\label{sec:epi_model}
We introduce the epidemiological dynamics involving a coupling between infection spread and population behavior. We consider a large homogeneous population, such that at any instant $t$ each individual is either \emph{susceptible} or \emph{infected}. We denote the proportion of susceptible agents as $\St(t) \in [0, 1]$ and of infected agents as $\It(t) \in [0, 1]$. All individuals have the choice of following the health measure recommendations denoted by $\mathcal{C}$ (action \emph{cooperation}), or defying them denoted by $\mathcal{D}$ (action \emph{defection}). In our game setting, let time-varying $x(t) \in [0, 1]$ be the proportion of individuals who cooperate. Cooperation reduces an individual's effective contribution to disease transmission, whereas defection leaves transmission unchanged. Under homogeneous mixing, the population-level effective transmission rate is modeled as $\beta (1 - x)$. Given that $\St(t) = 1 - \It(t)$ is true at all instants, it suffices to examine only the evolution of the infected fraction, which we define as
\begin{equation}
    \dot{\It}(t) = \beta \cdot (1 - x(t)) \cdot \It(t) \cdot (1 - \It(t)) - \gamma \cdot \It(t),
    \label{eq:sis_dyn}
\end{equation}
where $\beta > 0$ and $\gamma > 0$ are the transmission and recovery rates, respectively. 
We also introduce the social state as $\zt(t) := (x(t), \It(t))$ which encodes the instantaneous state of the environment.

Individuals receive payoffs from choosing either cooperation or defection. We borrow the payoff structure used in standard feedback-evolving games, \cite{weitz2016oscillating,paarporn2023sis}. In this framework, individuals engage in pairwise interaction and obtain rewards defined by the payoff matrix

\begin{align}
    \begin{bmatrix}
        R_{\It}(t) & S_{\It}(t)\\
        T_{\It}(t) & P_{\It}(t)
    \end{bmatrix} \!=\! (1 - \It(t)) \begin{bmatrix}
        R_1 & S_1\\
        T_1 & P_1
    \end{bmatrix} + \It(t) \begin{bmatrix}
        R_0 & S_0\\
        T_0 & P_0
    \end{bmatrix}
    \label{eq:payoff}
\end{align}
where the first row is the payoff that a cooperator receives on interacting with a cooperator (first column) or a defector (second column), and the elements of the second row are the payoffs received by a defector for playing against a cooperator (first column) or a defector (second column). Although we do not impose any restrictions on the structure of $Q_0$, throughout the paper we assume that the payoffs satisfy $T_1 - R_1 > 0$ and $P_1 - S_1 > 0$. This is a logical assumption that arises from the reasoning that defection strictly dominates cooperation in the disease-free environment.

Henceforth, we suppress the dependence on time $t$ for better readability, and explicitly include it whenever necessary. At a social state $\zt$, the average payoffs received by a cooperator and a defector, respectively are given by
\begin{align}
    \ut_{\ct}(\zt) &:= x \cdot R_{\It} + (1 - x) \cdot S_{\It}, \nonumber
    \\ \ut_{\dt}(\zt) &:= x \cdot T_{\It} + (1 - x) \cdot P_{\It}.
    \label{eq:avg:payoff}
\end{align}
In order to incorporate bounded rationality, we model the behavioral dynamics according to continuous-time logit learning, given by
\begin{equation}
    \dot{x}(\zt; \sigma) = \frac{e^{\sigma \ut_{\ct}(\zt)}}{e^{\sigma \ut_{\ct}(\zt)} + e^{\sigma \ut_{\dt}(\zt)}} - x,
    \label{eq:logit_dyn}
\end{equation}
where $\sigma \geq 0$ is the parameter that captures the rationality of an individual. Note that $\sigma = 0$ implies that an individual is completely irrational, and as $\sigma$ increases she chooses the action that maximizes her payoff with a greater probability. 
Before we define and characterize stationary logit equilibrium, we establish the positive invariance of $\Upgamma := [0, 1]^2$ in the following lemma.
\begin{lemma}
    The set $\Upgamma := [0, 1]^2$ is positively invariant with respect to \eqref{eq:sis_dyn} and \eqref{eq:logit_dyn}.
    \label{lemm:pos_inv}
\end{lemma}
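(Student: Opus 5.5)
The plan is a standard boundary (tangency) argument: the vector field of \eqref{eq:sis_dyn}--\eqref{eq:logit_dyn} is smooth, so it is locally Lipschitz and solutions are unique. It therefore suffices to show that on each edge of the square $\Upgamma$ the vector field points inward or is tangent. Since $\Upgamma$ is a compact convex polytope, the Nagumo (Bony--Brezis) subtangential condition then yields positive invariance. Because the $x$- and $\It$-components of the boundary behave differently, I will check them separately.

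First, the behavioral component. For any $\zt$ and any $\sigma\ge 0$, the logit term $\frac{e^{\sigma \ut_{\ct}}}{e^{\sigma \ut_{\ct}}+e^{\sigma \ut_{\dt}}}$ lies strictly in $(0,1)$, since both exponentials are positive and finite. At $x=0$ this gives $\dot x>0$, and at $x=1$ it gives $\dot x<0$. So the flow points strictly inward on the edges $x=0$ and $x=1$, for every value of $\It\in[0,1]$. Note that this uses the payoffs \eqref{eq:avg:payoff} only through their finiteness.

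Second, the epidemic component. At $\It=0$, \eqref{eq:sis_dyn} gives $\dot\It=0$. Thus the edge $\{\It=0\}$ is invariant, because the solution with $\It(0)=0$ stays at $\It\equiv 0$ by uniqueness. At $\It=1$ we have $\dot\It=-\gamma<0$ since $\gamma>0$, so the flow points strictly inward. Here one needs $x\in[0,1]$ so that $1-x\ge 0$, but that term vanishes at $\It=1$ anyway.

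The main obstacle, though a mild one, is rigor at the corners and on the tangent edge $\It=0$, where the inward-pointing argument is not strict. To handle this, I would give a direct argument instead of citing Nagumo. Write $\dot \It=\It\,g(t)$ with $g(t)=\beta(1-x)(1-\It)-\gamma$, which is continuous along the solution. Then
\[
\It(t)=\It(0)\exp\Bigl(\int_0^t g(s)\,ds\Bigr)\ge 0,
\]
so $\It$ never changes sign. Similarly, $x(t)$ satisfies $\dot x=p(t)-x$ with $p(t)\in(0,1)$, which gives
\[
x(t)=e^{-t}x(0)+\int_0^t e^{-(t-s)}p(s)\,ds\in[0,1].
\]
Finally, $1-\It$ satisfies $\frac{d}{dt}(1-\It)=\gamma\It-\beta(1-x)\It(1-\It)$. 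On the portion of the trajectory with $x\in[0,1]$ and $\It\ge0$, this is $\ge -\beta(1-x)\It\,(1-\It)$, which has the form $h(t)(1-\It)$ plus a nonnegative term. A comparison/Grönwall argument then gives $1-\It\ge0$. Since these bounds are valid on the maximal interval of existence and confine the solution to a compact set, the solution exists globally and remains in $\Upgamma$.
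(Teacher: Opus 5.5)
Your proposal is correct. Your edge-by-edge check is essentially the paper's whole proof. The paper verifies $\dot{\It}=0$ at $\It=0$, $\dot{\It}=-\gamma$ at $\It=1$, $\dot x\ge 0$ at $x=0$ and $\dot x\le 0$ at $x=1$, and then cites Nagumo's theorem.

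Your final argument takes a different route: you integrate each coordinate explicitly. You use $\It(t)=\It(0)\exp(\int_0^t g)$, then variation of constants for $x$ with forcing $p(s)\in(0,1)$, then the same for $1-\It$ with nonnegative forcing $\gamma\It$. You take these in the right order, since the bound $\It\le 1$ uses $\It\ge 0$.

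The paper's route is shorter but relies on an external theorem, which needs uniqueness; that holds here because the vector field is smooth. The corners and the tangential edge $\It=0$ are not actually an obstacle for Nagumo. The tangent cone of a box at a corner is the intersection of the half-spaces of the active faces. The paper's non-strict face inequalities hold for every value of the other coordinate, so the subtangential condition is verified on all of $\partial\Upgamma$.

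Your route is elementary and self-contained, and it gives global existence explicitly. It also yields slightly more than the paper's argument records: $x(t)\in(0,1)$ for all $t>0$, $\It(t)>0$ whenever $\It(0)>0$, and invariance of the face $\{\It=0\}$.
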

\begin{proof}
    The proof is presented in the Appendix.
\end{proof}

\section{Stationary Logit Equilibrium Characterization}
\label{secchar_slre}

We begin this section by presenting a formal definition of stationary logit equilibrium (SLE) of \eqref{eq:sis_dyn} and \eqref{eq:logit_dyn}.

\subsection{SLE Definition}
\begin{definition}
    For a given rationality parameter $\sigma \in [0, \infty]$, a social state $\zt^*=(x^*, \It^*)$ is called a stationary logit equilibrium if both
    \begin{align}
        &\dot{\It}(x^*, \It^*) = 0 \nonumber \textnormal{ and}
        \\ &\dot{x}(x^*, \It^*; \sigma) = 0 \nonumber
    \end{align}
    simultaneously hold, where for finite $\sigma < \infty$, the behavioral dynamics are given by the logit response, while $\sigma \rightarrow \infty$ denotes the complete rationality limit.
\label{def:slre}
\end{definition}
We compute payoff difference $\Delta \ut(\zt)$ from \eqref{eq:payoff} and \eqref{eq:avg:payoff} as
\begin{align}
    \Delta \ut(\zt) &:= \ut_{\dt}(\zt) - \ut_{\ct}(\zt) \nonumber
    \\ &\;\;= (1 - \It) x \delta_{TR1} + \It x \delta_{TR0} \nonumber
    \\ &\quad \;\;\; +(1 - \It) (1 - x) \delta_{PS1} + \It (1 - x) \delta_{PS0},
    \label{eq:diff_payoff}
\end{align}
where $\delta_{GFn} = G_n - F_n$, such that $G, F \in \{P, R, S, T\}$ and $n = \{0, 1\}$.
From Definition \ref{def:slre}, the solutions of
\begin{align}
    &\beta \cdot (1 - x^*) \cdot \It^* \cdot (1 - \It^*) - \gamma \cdot \It^* = 0, \textnormal{ and}
    \label{eq:I_dot}
    \\ &x^*(\sigma) = \frac{1}{1 + e^{\sigma \Delta \ut(\zt^*)}}
    \label{eq:logit_eq_pt}
\end{align}
define the SLE $\zt^* = (x^*, \It^*)$. The behavioral component of an SLE ($\zt^*$) is captured by $x^*$, whereas, $\It^*$ encodes the infection state of the population.

\subsection{Behavioral Characterization of SLE}

 We define the complete-rationality limit as the pointwise limit of the logit response \eqref{eq:logit_dyn} as $\sigma \rightarrow \infty$. At payoff indifference of $\Delta \ut = 0$, this convention selects the equally mixed response $x^* = 0.5$, whereas the set-valued best-response correspondence permits any $x^* \in [0, 1]$. The two coincide whenever $\Delta \ut \neq 0$. 
\begin{proposition}
    Suppose $\zt^*$ is an SLE of the system. When agents possess bounded rationality and additionally $\sigma > 0$ holds, the behavioral component of SLE is characterized as
    \begin{align}
       &\Delta \ut(\zt^*) > 0 \iff 0 < x^* < 0.5, \nonumber %1 - \frac{2 \gamma}{\beta} < \It^* < 1 - \frac{\gamma}{\beta}; \nonumber
       \\&\Delta \ut(\zt^*) < 0 \iff 1 > x^* > 0.5, \nonumber %0 \leq \It^*  < 1 - \frac{2 \gamma}{\beta}; \nonumber
       \\&\Delta \ut(\zt^*) = 0 \iff x^* = 0.5.%, \It^* = 1 - \frac{2 \gamma}{\beta}.
       \tag{$L^{\sigma}_{\mathtt{finite}}$} \label{eq:sigma<<infyFPP}
    \end{align}
    \label{prop:finite_sigma_x^*}
\end{proposition}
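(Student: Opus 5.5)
The plan is to read everything off the fixed-point relation \eqref{eq:logit_eq_pt}, which every SLE satisfies by Definition \ref{def:slre}. Write $x^* = f(\sigma \Delta \ut(\zt^*))$, where $f(u) := 1/(1+e^{u})$. The function $f$ is a continuous, strictly decreasing bijection from $\mathbb{R}$ onto $(0,1)$, and $f(0) = 0.5$. Because $\sigma$ is finite, $\sigma\Delta\ut(\zt^*)$ is a finite real number, so $x^*$ always lies strictly between $0$ and $1$. This already gives the strict outer bounds $0 < x^*$ and $x^* < 1$ that appear in the first two lines of \eqref{eq:sigma<<infyFPP}.

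The forward implications come next. Since $\sigma > 0$, the sign of $\sigma\Delta\ut(\zt^*)$ equals the sign of $\Delta\ut(\zt^*)$. If $\Delta\ut(\zt^*) > 0$, then $e^{\sigma\Delta\ut} > 1$, hence $x^* < 0.5$. If $\Delta\ut(\zt^*) < 0$, then $e^{\sigma\Delta\ut} < 1$, hence $x^* > 0.5$. If $\Delta\ut(\zt^*) = 0$, then $x^* = 0.5$. All three are just strict monotonicity of $f$ together with $f(0)=0.5$.

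The converse directions follow because the three sign cases for $\Delta\ut(\zt^*)$ are exhaustive and mutually exclusive, and so are the three ranges $x^*<0.5$, $x^*>0.5$, $x^*=0.5$. For example, suppose $0 < x^* < 0.5$ but $\Delta\ut(\zt^*) \le 0$. Then the forward implications give $x^* \ge 0.5$, a contradiction; the other two cases are identical. Alternatively, one can invert directly: $x^*\in(0,1)$ gives $\sigma\Delta\ut(\zt^*) = \ln\big((1-x^*)/x^*\big)$, whose sign is positive, zero, or negative exactly when $x^*<0.5$, $x^*=0.5$, or $x^*>0.5$. Dividing by $\sigma>0$ preserves that sign.

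There is no real obstacle. The only point needing care is that the hypothesis $\sigma>0$ is essential. At $\sigma=0$ one has $x^*=0.5$ for every SLE regardless of $\Delta\ut$, so the biconditionals in the first two lines would fail. The proof must therefore invoke $\sigma>0$ explicitly when transferring the sign from $\sigma\Delta\ut$ to $\Delta\ut$. It must also note that $x^*$ here means the equilibrium value evaluated at the equilibrium payoff difference $\Delta\ut(\zt^*)$, so no coupling with \eqref{eq:I_dot} is needed.
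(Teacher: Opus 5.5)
Your proposal is correct and is essentially the paper's argument, which simply substitutes the three sign cases of $\Delta\ut(\zt^*)$ into the fixed-point relation \eqref{eq:logit_eq_pt}. You go a bit further than the paper in two ways: you make the converse directions explicit, via exhaustiveness or the inverse $\sigma\Delta\ut(\zt^*) = \ln\big((1-x^*)/x^*\big)$, and you pinpoint where finiteness and positivity of $\sigma$ are used; one small quibble is that at $\sigma = 0$ the third biconditional also fails (in its reverse direction), not only the first two.
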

\begin{proof}
    The finite-$\sigma$ behavioral SLE characterization follows by substituting $\Delta \ut(\zt^*) > 0$, or $\Delta \ut(\zt^*) < 0$, or $\Delta \ut(\zt^*) = 0$ in \eqref{eq:logit_eq_pt}.
\end{proof}

The following proposition characterizes behavioral strategy of the equilibrium social state under complete rationality.
\begin{proposition}
    The behavioral characterization of SLE under complete rationality, i.e., in the limiting case of $\sigma \rightarrow \infty$, is different from Proposition \ref{prop:finite_sigma_x^*}. The characterization is
    \begin{align}
       &\Delta \ut(\zt^*) > 0 \iff x^* = 0, \nonumber  \nonumber
       \\&\Delta \ut(\zt^*) = 0 \iff x^* = 0.5.
       \tag{$L^{\sigma}_{\infty}$} \label{eq:sigma=infyFPP}
    \end{align}
    \label{rem:x^*_char_sigma=infty}
\end{proposition}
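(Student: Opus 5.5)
The plan is to read the SLE condition \eqref{eq:logit_eq_pt} in the complete-rationality limit. By the convention fixed just before Proposition \ref{prop:finite_sigma_x^*}, this limit is the pointwise limit of $\phi_\sigma(\Delta) := 1/(1+e^{\sigma\Delta})$ as $\sigma \to \infty$, with the value $0.5$ kept at $\Delta = 0$. The first step is to compute this limit for fixed $\Delta$:
\begin{equation*}
\phi_\infty(\Delta) := \lim_{\sigma\to\infty}\phi_\sigma(\Delta) =
\begin{cases} 0, & \Delta > 0,\\ 0.5, & \Delta = 0,\\ 1, & \Delta < 0. \end{cases}
\end{equation*}
For $\Delta > 0$ we have $e^{\sigma\Delta} \to \infty$, for $\Delta < 0$ we have $e^{\sigma\Delta} \to 0$, and for $\Delta = 0$ the expression equals $0.5$ for every $\sigma$. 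A stationary state in the limit then satisfies $x^* = \phi_\infty(\Delta\ut(\zt^*))$. I would treat this pointwise fixed-point equation as the defining condition from Definition \ref{def:slre} with $\sigma \to \infty$.

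The forward implications are then immediate. If $\Delta\ut(\zt^*) > 0$ then $x^* = 0$, and if $\Delta\ut(\zt^*) = 0$ then $x^* = 0.5$.

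For the reverse implications I would argue by exclusion. Suppose $x^* = 0$. Since $\phi_\infty$ only takes the values $\{0, 0.5, 1\}$ and the three cases are disjoint, $x^* = 0$ forces $\Delta\ut(\zt^*) > 0$. The same reasoning gives $x^* = 0.5 \Rightarrow \Delta\ut(\zt^*) = 0$.

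The remaining point, and the one I expect to be the main obstacle, is explaining why the branch $\Delta\ut < 0 \iff x^* = 1$ is absent from the statement. The plan is to show that $\Delta\ut(\zt^*) < 0$ is incompatible with stationarity. If $\Delta\ut(\zt^*) < 0$, then $x^* = 1$. Evaluating \eqref{eq:diff_payoff} at $x = 1$ gives $\Delta\ut = (1-\It^*)\delta_{TR1} + \It^*\delta_{TR0}$. Equation \eqref{eq:I_dot} with $x^* = 1$ reduces to $-\gamma\It^* = 0$, so $\It^* = 0$. Then $\Delta\ut = \delta_{TR1} = T_1 - R_1 > 0$ by the standing assumption, which contradicts $\Delta\ut < 0$.

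This also shows that the listed cases are exhaustive, which justifies presenting only the two lines in \eqref{eq:sigma=infyFPP}. The contrast with Proposition \ref{prop:finite_sigma_x^*} comes from the strict inequalities $0 < x^* < 0.5$ there, which collapse to the endpoint $x^* = 0$ in the limit.
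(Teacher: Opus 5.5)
Your proposal is correct and takes the paper's approach. The paper uses its stated pointwise-limit convention for the logit map and excludes the $\Delta\ut<0$ branch by noting that $x^*=1$ forces $\It^*=0$ via \eqref{eq:I_dot}, so that $\Delta\ut(1,0)=\delta_{TR1}>0$, a contradiction. Your argument does the same, and it also spells out the reverse implications through the disjointness of the values $\{0,0.5,1\}$, which the paper leaves implicit.
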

We exclude $\Delta \ut(\zt^*) < 0$ because when payoff difference is strictly negative, under complete rationality $\zt^* = (x^* = 1, \It^* = 0)$ is the only admissible SLE, whereas $\Delta \ut(x^* = 1, \It^* = 0) = \delta_{TR1}$ is strictly positive, which contradicts the premise $\Delta \ut(\zt^*) < 0$. This implies that $(1, 0)$ is not an SLE. 

We now categorize SLE based on characteristics of $x^*$ and $\It^*$, where $\It^*$ can either be $0$, or equal to $1 - \frac{\gamma}{\beta (1 - x^*)}$.
\begin{itemize}
    \item Disease-free SLE $\zt^*_{\mathtt{DFE}} := (x^*_{\mathtt{DFE}}, \It^*_{\mathtt{DFE}})$, where $\It^*_{\mathtt{DFE}} = 0$,
    \item Interior endemic SLE $\zt^*_{\mathtt{EE(int)}} := (x^*_{\mathtt{EE(int)}}, \It^*_{\mathtt{EE(int)}})$, where $x^*_{\mathtt{EE(int)}} = 0.5, \It^*_{\mathtt{EE(int)}} = 1 - \frac{2 \gamma}{\beta} > 0$,
    \item Finite-$\sigma$ endemic SLE $\zt^*_{\mathtt{EE(\sigma)}} := (x^*_{\mathtt{EE(\sigma)}}, \It^*_{\mathtt{EE(\sigma)}})$, where $x^*_{\mathtt{EE(\sigma)}} \in (0, 1) \setminus \{0.5 \}, \It^*_{\mathtt{EE(\sigma)}} = 1 - \frac{\gamma}{\beta (1 - x^*_{\mathtt{EE(\sigma)}})} > 0$,
    \item Corner endemic SLE $\zt^*_{\mathtt{EE(cor)}} := (x^*_{\mathtt{EE(cor)}}, \It^*_{\mathtt{EE(cor)}})$, where $x^*_{\mathtt{EE(cor)}} = 0, \It^*_{\mathtt{EE(cor)}} = 1 - \frac{\gamma}{\beta} > 0$.
\end{itemize}
Note that corner endemic SLE $\zt^*_{\mathtt{EE(cor)}}$ does not exist under bounded rationality and finite-$\sigma$ endemic SLE $\zt^*_{\mathtt{EE(\sigma)}}$ does not exist when individuals are completely rational. 

In order to compare with the popular imitation learning, we characterize $x^*$ when individuals update their strategies in accordance with \emph{replicator dynamics} \cite{sandholm2010population}.
\begin{remark}
    The characterizations in \eqref{eq:sigma<<infyFPP} and \eqref{eq:sigma=infyFPP} differ from that of replicator dynamics defined as
    \begin{equation}
        \dot{x} = x (1 - x) (-\Delta \ut(\zt)),
        \label{eq:repl_dyn}
    \end{equation}
    for which the stable behavioral states satisfy
    \begin{align}
    \Delta\mathcal{U}(\mathbf{z}^*) > 0 &\implies x^* = 0, \nonumber
    \\ \Delta\mathcal{U}(\mathbf{z}^*) = 0 &\implies x^* \in [0, 1). \tag{R}
    \label{eq:replicatorFPP}
\end{align}
Here, we exclude $x^* = 1$ since, although $\dot{x} = 0$ holds at $x = 1$, the state $(1,0)$ is unstable as $\Delta\mathcal{U}(1,0) = \delta_{TR1} > 0$.
The exact fixed-point social state $(x^*, \It^*)$ in the second case of \eqref{eq:replicatorFPP} is obtained by solving for $\Delta \ut\big(x^*, 1 - \frac{\gamma}{\beta (1 - x^*)}\big) = 0$. In other words, the interior stationary point of replicator dynamics is not necessarily always $x^* = 0.5$, which is in contrast to logit dynamics. 
\label{rem:repli_dyn}
\end{remark}

We analyze the closed-loop system \eqref{eq:sis_dyn} and \eqref{eq:logit_dyn} and study the existence of each SLE in the following section.

\section{Existence of SLE}
\label{sec:exist_unique}

We begin this section by stating the following proposition on existence and local asymptotic stability of an infection equilibrium defined by system \eqref{eq:sis_dyn}.
\begin{proposition}
    Let fixed $x^*_{\mathtt{DFE}}, x^*_{\mathtt{EE}} \in [0, 1]$ be a solution to equation \eqref{eq:logit_eq_pt}. Then the epidemic dynamics \eqref{eq:sis_dyn} has two equilibria:
    \begin{itemize}
        \item the disease-free equilibrium $\It^*_{\mathtt{DFE}} = 0$, which always exists, and is locally asymptotically stable (LAS) only when $\gamma \geq \beta (1 - x^*_{\mathtt{DFE}})$, and
        \item the endemic equilibrium $\It^*_{\mathtt{EE}} = 1 - \frac{\gamma}{\beta (1 - x^*_{\mathtt{EE}})}$, which is LAS and exists only when $\gamma < \beta (1 - x^*_{\mathtt{EE}})$.
    \end{itemize}
    \label{prop:exist_sis}
\end{proposition}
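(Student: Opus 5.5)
With $x$ held fixed at the value $x^*$ supplied by \eqref{eq:logit_eq_pt}, the plan is to treat \eqref{eq:sis_dyn} as a scalar autonomous ODE in $\It$ on $[0,1]$. Write $b := \beta(1-x^*)\ge 0$ and $f(\It) := b\,\It(1-\It) - \gamma \It = \It\,(b - \gamma - b\It)$. The argument has two parts: find the equilibria as the roots of $f$, then classify each root by the sign of $f'$, or of $f$ itself near the root in borderline cases.

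\textbf{Existence.} The factorization gives $f(\It)=0$ if and only if $\It = 0$ or $b\It = b-\gamma$.
\begin{itemize}
\item The root $\It=0$ is always present, so the disease-free equilibrium always exists.
\item For the second root, first note that $b>0$ is needed. If $b=0$, then $f(\It)=-\gamma\It$, whose only root is $0$.
\item When $b>0$, the second root is $\It = 1 - \gamma/b = 1 - \frac{\gamma}{\beta(1-x^*_{\mathtt{EE}})}$. It lies in $(0,1)$, and is therefore a genuine endemic state distinct from the DFE, exactly when $\gamma < b$. Here $\It<1$ holds automatically because $\gamma>0$.
\end{itemize}
This settles the existence claims.

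\textbf{Stability.} Compute $f'(\It) = b - \gamma - 2b\It$.
\begin{itemize}
\item At $\It=0$ we have $f'(0) = b-\gamma$. If $\gamma > b$, then $f'(0)<0$ and the DFE is LAS. If $\gamma<b$, then $f'(0)>0$ and the DFE is unstable.
\item At the endemic point, $f'(1-\gamma/b) = b-\gamma-2(b-\gamma) = -(b-\gamma) < 0$ whenever the point exists. So the EE is LAS throughout its existence region.
\end{itemize}

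\textbf{Main obstacle: the boundary case $\gamma = b$.} This is the only delicate step, since linearization is inconclusive there. In that case $f(\It) = -b\It^2$, and if $b=0$ then $f = -\gamma \It$. On the invariant set $[0,1]$ of Lemma \ref{lemm:pos_inv} we therefore have $\dot{\It}<0$ for every $\It\in(0,1]$. Trajectories starting in $[0,1]$ thus decrease monotonically to $0$, and the DFE is asymptotically stable relative to $[0,1]$, with the decay being algebraic rather than exponential. This accounts for the non-strict inequality $\gamma \ge \beta(1-x^*_{\mathtt{DFE}})$ in the statement.

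The ``only when'' direction follows from the instability shown above for $\gamma<b$. One could also argue from the sign of $f$ directly: for small $\It>0$ we have $f(\It)>0$, so trajectories are repelled from $0$. Stability is understood relative to the feasible set $\Upgamma$, since negative $\It$ is not physically meaningful.
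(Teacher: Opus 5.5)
Your proof is correct and is essentially the argument the paper intends: the paper omits the proof, deferring to the analogous discrete-time result in \cite{maitra2023sis}, and your frozen-$x$ scalar analysis (the roots of $\It(b-\gamma-b\It)$ and the sign of $f'$ at each root) is its continuous-time counterpart. Your extra treatment of the non-hyperbolic case $\gamma=\beta(1-x^*_{\mathtt{DFE}})$ via $\dot{\It}=-b\It^2<0$ on $(0,1]$ is sound and worth keeping: read literally, the DFE clause only asserts necessity, which your instability argument for $\gamma<b$ already gives, but the paper later uses the condition as an ``if and only if'' in the proof of Theorem~\ref{thm:dfe_stab}, and your argument is what justifies the non-strict inequality for fixed $x$.
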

The above result is analogous to the result obtained in the discrete-time version of the system examined in \cite{maitra2023sis}, and hence the proof is omitted in the interest of space. Characterization of the (completely rational) population strategy when infection is completely absent in steady-state is presented in the following remark.  
\begin{remark}
    Proposition \ref{prop:exist_sis} states that the disease-free branch is LAS when $1\geq x^*_{\mathtt{DFE}}\geq 1-\frac{\gamma}{\beta}$. Since $\delta_{TR1}, \delta_{PS1} > 0$, defection is the unique strictly dominant action in the disease-free regime under complete rationality. Hence, the equilibrium strategy is $x^*_{\mathtt{DFE}}=0$, and the corresponding SLE is $(0, 0)$. However, we later show that when the basic reproduction number $\mathcal{R}_0 := \frac{\beta}{\gamma} > 1$, SLE $(0, 0)$ is unstable.
    \label{rem:X^*=0_dominant_I^*=0}
\end{remark}
 
 We now derive conditions under which each SLE exists.

\subsection{Existence of SLE $\zt^*_{\mathtt{DFE}}$}

Proposition \ref{prop:exist_sis} establishes that a disease-free equilibrium $\zt^*_{\mathtt{DFE}}$ always exists. We now tackle the next SLE.

\subsection{Existence of SLE $\zt^*_{\mathtt{EE(int)}}$}

We state the lemma on the existence and uniqueness of interior endemic equilibrium.

\begin{lemma}
    Suppose the basic reproduction number $\mathcal{R}_{0} := \frac{\beta}{\gamma} > 2$ holds. Then for all $\sigma \geq 0$, the interior endemic SLE $\zt^*_{\mathtt{EE(int)}} := (x^*_{\mathtt{EE(int)}} = 0.5, \It^*_{\mathtt{EE(int)}} = 1 - \frac{2 \gamma}{\beta})$ exists when
    \begin{equation}
        2 \gamma (\delta_{TR1} + \delta_{PS1}) + (\beta - 2 \gamma) (\delta_{TR0} + \delta_{PS0}) = 0
        \label{eq:nc_x^*=0.5}
    \end{equation}
    is satisfied. Furthermore, suppose the following holds:
    \begin{align}
    \min &\bigg\{\delta_{TR0} - \delta_{PS0} + \frac{2 \gamma}{\beta} (\delta_{TR1} - \delta_{TR0}), \nonumber
    \\ &\qquad \qquad 2 \delta_{TR1} - \delta_{TR0} - \delta_{PS0} \bigg\} \geq 0.
        \label{eq:cond1}
    \end{align}
    Then $\zt^*_{\mathtt{EE(int)}}$ is the unique endemic SLE.
    \label{lemm:x^*=0.5_exist_unique}
\end{lemma}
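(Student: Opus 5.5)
Existence is a direct substitution, and uniqueness comes from a monotonicity argument along the endemic curve. For existence, we set $x^* = 0.5$ and $\It^* = 1 - \frac{2\gamma}{\beta}$. This point lies in $(0,1)$ precisely because $\mathcal{R}_0 > 2$, and it satisfies \eqref{eq:I_dot} since $\beta(1 - 0.5)(1 - \It^*) = \gamma$. By Proposition \ref{prop:finite_sigma_x^*} (and Proposition \ref{rem:x^*_char_sigma=infty} for $\sigma \to \infty$; the case $\sigma = 0$ gives $x^* = 0.5$ trivially), \eqref{eq:logit_eq_pt} holds at $x^* = 0.5$ for every $\sigma$ if and only if $\Delta\ut(\zt^*) = 0$. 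Substituting $x = 1/2$ and $\It = 1 - 2\gamma/\beta$ into \eqref{eq:diff_payoff} gives
\begin{equation*}
\Delta\ut = \tfrac{1}{2}\Big[\tfrac{2\gamma}{\beta}(\delta_{TR1} + \delta_{PS1}) + \big(1 - \tfrac{2\gamma}{\beta}\big)(\delta_{TR0} + \delta_{PS0})\Big].
\end{equation*}
Multiplying by $2\beta$, this vanishes exactly when \eqref{eq:nc_x^*=0.5} holds.

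For uniqueness, every endemic SLE lies on the curve $\It = h(x) := 1 - \frac{\gamma}{\beta(1-x)}$, with $x \in [0, 1 - \gamma/\beta)$. We therefore study the scalar function $g(x) := \Delta\ut(x, h(x))$. An endemic SLE with finite $\sigma$ is a root of $F_\sigma(x) := x - \frac{1}{1 + e^{\sigma g(x)}}$. In the limit $\sigma \to \infty$, it is either $x = 0$ with $g(0) > 0$, or $x = 0.5$ with $g = 0$. The plan is to show that \eqref{eq:cond1} makes $g$ nondecreasing on the admissible interval, with $g(0.5) = 0$. Then $g > 0$ for $x > 0.5$, which gives $x^* < 0.5$ and a contradiction. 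Likewise $g < 0$ for $x < 0.5$, which gives $x^* > 0.5$ and again a contradiction. Hence only $x^* = 0.5$ survives. In the complete-rationality limit, the corner candidate $x = 0$ requires $g(0) > 0$, which is excluded since $g(0) \le g(0.5) = 0$. Monotonicity of $g$ also makes $F_\sigma$ strictly increasing, since $F_\sigma' = 1 + \sigma x^*(1 - x^*)g' > 0$, which confirms uniqueness directly.

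To obtain monotonicity, write $\Delta\ut = A(\It)x + B(\It)$, where
\begin{equation*}
A(\It) = (1-\It)(\delta_{TR1} - \delta_{PS1}) + \It(\delta_{TR0} - \delta_{PS0}), \qquad B(\It) = (1-\It)\delta_{PS1} + \It\delta_{PS0}.
\end{equation*}
Then $g'(x) = A(h) + \big[(\delta_{TR0} - \delta_{TR1})x + (\delta_{PS0} - \delta_{PS1})(1-x)\big] h'(x)$, with $h'(x) = -\frac{\gamma}{\beta(1-x)^2} < 0$. Since $1 - h = \frac{\gamma}{\beta(1-x)}$, we can multiply through by $(1-x)$ and express everything in terms of $s := \frac{\gamma}{\beta(1-x)} \in [\gamma/\beta, 1)$. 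This turns $g'(x)(1-x)$ into an expression affine in $s$ and in $x$. It should then suffice to check nonnegativity at the extremal configurations, and the two entries of the minimum in \eqref{eq:cond1} should correspond to these. The entry $\delta_{TR0} - \delta_{PS0} + \frac{2\gamma}{\beta}(\delta_{TR1} - \delta_{TR0})$ looks like the value at the equilibrium point $s = 2\gamma/\beta$. The entry $2\delta_{TR1} - \delta_{TR0} - \delta_{PS0}$ looks like a boundary term obtained after combining with \eqref{eq:nc_x^*=0.5}.

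The main obstacle is this last step: matching the sign of $g'$ (or at least the sign of $g$ on each side of $0.5$) exactly to \eqref{eq:cond1}. If global monotonicity does not reduce cleanly to these two conditions, the fallback is a weaker claim. We would show only that $g(x)$ has the sign of $x - 0.5$, using \eqref{eq:nc_x^*=0.5} to factor $g(x) = (x - 0.5)\,q(x)/(1-x)$. It then remains to verify $q \ge 0$ on the interval, which is a linear-in-$x$ check with endpoint values given by the two terms of \eqref{eq:cond1}.
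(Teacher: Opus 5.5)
Your fallback is correct, and it is the paper's proof. Under \eqref{eq:nc_x^*=0.5} the paper derives $\Delta\ut\big(x, 1-\frac{\gamma}{\beta(1-x)}\big) = (x-\tfrac12)H(x)$ with $H(x) = \delta_{TR0}-\delta_{PS0} + \frac{2\gamma}{\beta(1-x)}(\delta_{TR1}-\delta_{TR0})$. Your $q$ is therefore $(1-x)H(x) = (1-x)(\delta_{TR0}-\delta_{PS0}) + \frac{2\gamma}{\beta}(\delta_{TR1}-\delta_{TR0})$, which is linear in $x$. Its value $q(0)$ is the first entry of \eqref{eq:cond1}, and $q(1-\gamma/\beta)$ is $\frac{\gamma}{\beta}$ times the second entry. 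So the first entry corresponds to $s=\gamma/\beta$, not $s=2\gamma/\beta$ as you guessed. The sign contradiction with Proposition \ref{prop:finite_sigma_x^*} needs only these weak inequalities, exactly as in the paper.

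Your primary plan, however, does not work and should be dropped rather than held in reserve: \eqref{eq:cond1} does not make $g$ nondecreasing. Since $g=(x-\tfrac12)H$, we have $g'(x) = H(x) + (x-\tfrac12)H'(x)$. For $x<\tfrac12$ and $\delta_{TR1}>\delta_{TR0}$ the second term is negative. A concrete case is $\beta=1$, $\gamma=\tfrac14$, $\delta_{TR1}=\delta_{PS1}=1$, $\delta_{TR0}=-1.6$, $\delta_{PS0}=-0.4$. This satisfies \eqref{eq:nc_x^*=0.5} and makes the two entries of \eqref{eq:cond1} equal to $0.1$ and $4$. Yet $H(x) = -1.2+\frac{1.3}{1-x}$, so $g'(0) = 0.1 - 0.65 = -0.55 < 0$. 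Consequences:
\begin{itemize}
\item The shortcut ``$F_\sigma$ is strictly increasing, hence the root is unique'' is unavailable; uniqueness must come from the sign argument alone.
\item Your exclusion of the corner candidate in the $\sigma\to\infty$ limit should use $g(0) = -\tfrac12 q(0) \le 0$ directly, not $g(0)\le g(0.5)$ via monotonicity. The paper's proof does not treat that limit explicitly, so this check is a small addition on your part.
\end{itemize}
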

\begin{proof}
    Note that the constraint on reproduction number $\mathcal{R}_0 > 2$ is necessary for the existence of $\It^*_{\mathtt{EE(int)}}$. This is because, $\mathcal{R}_0 < 2 \Rightarrow \frac{\beta}{\gamma} < 2$, which in turn implies that $\frac{2 \gamma}{\beta} > 1$. Since the steady-state infected fraction at $x^*_{\mathtt{EE(int)}} = 0.5$ is $1 > \It^*_{\mathtt{EE(int)}} = 1 - \frac{2 \gamma}{\beta} > 0$, the endemic equilibrium ceases to exist unless $\mathcal{R}_0 > 2$ holds. First, we derive a condition under which SLE $(0.5, 1 - \frac{2 \gamma}{\beta})$ exists, i.e., $\Delta \ut(\zt^*_{\mathtt{EE(int)}}) = 0$ holds. Substituting $x^*_{\mathtt{EE(int)}} \!\!=\! 0.5, \It^*_{\mathtt{EE(int)}} \!\! = \! 1 \! - \! \frac{2 \gamma}{\beta}$ in \eqref{eq:diff_payoff} yields $$ 2 \gamma (\delta_{TR1} + \delta_{PS1}) + (\beta - 2 \gamma) (\delta_{TR0} + \delta_{PS0}) = 0.$$

    We now prove the uniqueness of the SLE. To this end, we define a function $H(x^*_{\mathtt{EE}})$ on the endemic branch as
    \begin{equation}
    H(x^*_{\mathtt{EE}}) = \delta_{TR0} - \delta_{PS0} + \frac{2 \gamma}{\beta (1 - x^*_{\mathtt{EE}})} (\delta_{TR1} - \delta_{TR0}).
    \label{eq:H_x}
\end{equation}
    At any endemic equilibrium $(x^*_{\mathtt{EE}}, \It^*_{\mathtt{EE}})$, the following hold.
    \begin{align}
        0.5(2 x^*_{\mathtt{EE}} &- 1) H(x^*_{\mathtt{EE}}) = (x^*_{\mathtt{EE}} - 0.5)(\delta_{TR0}- \delta_{PS0}) \nonumber
        \\ &\quad+ \frac{\gamma (2 x^*_{\mathtt{EE}} - 1)}{\beta (1 - x^*_{\mathtt{EE}})} (\delta_{TR1}- \delta_{TR0}), \label{eq:0.5(2x-1)H}
        \\ \Delta\ut \big(x^*_{\mathtt{EE}}, &1 - \frac{\gamma}{\beta (1 - x^*_{\mathtt{EE}})}\big) = x^*_{\mathtt{EE}} \delta_{TR0} + (1 - x^*_{\mathtt{EE}}) \delta_{PS0} \nonumber
        \\ &\quad + \frac{\gamma}{\beta} (\delta_{PS1} \! - \! \delta_{PS0}) \! + \! \frac{\gamma x^*_{\mathtt{EE}}}{\beta (1 - x^*_{\mathtt{EE}})} (\delta_{TR1} \! - \! \delta_{TR0}).
        \label{eq:delta_U=H}
    \end{align}
    Now, subtracting \eqref{eq:0.5(2x-1)H} from \eqref{eq:delta_U=H} yields
    \begin{align*}
        \Delta\ut &\big(x^*_{\mathtt{EE}}, 1 - \frac{\gamma}{\beta (1 - x^*_{\mathtt{EE}})}\big) - 0.5(2 x^*_{\mathtt{EE}} - 1) H(x^*_{\mathtt{EE}})
        \\ & = 0.5 (\delta_{TR0} \! + \! \delta_{PS0}) \! + \! \frac{\gamma}{\beta} (\delta_{TR1} \!+ \! \delta_{PS1} \! - \delta_{TR0} \! - \! \delta_{PS0}),
        %\\ &= 0,
    \end{align*}
    which is equal to zero from equation \eqref{eq:nc_x^*=0.5}. Thus, when \eqref{eq:nc_x^*=0.5} is satisfied, we have
    \begin{equation}
        \Delta \ut(x^*_{\mathtt{EE}}, 1 - \frac{\gamma}{\beta (1 - x^*_{\mathtt{EE}})}) = 0.5 (2 x^*_{\mathtt{EE}} - 1) H(x^*_{\mathtt{EE}}).
        \label{eq:del_u_H}
    \end{equation}
    The sign of $H'(x^*_{\mathtt{EE}}) = \frac{2 \gamma}{\beta (1 - x^*_{\mathtt{EE}})^2} (\delta_{TR1} - \delta_{TR0})$ remains constant on the endemic branch, since it is monotonic in $x^*_{\mathtt{EE}}$. If $\delta_{TR1} \geq \delta_{TR0}$ holds, then $H(x^*_{\mathtt{EE}})$ is non-decreasing in $x^*_{\mathtt{EE}}$, and minimum occurs at $x^*_{\mathtt{EE}} = 0$, i.e., $H(0) = \delta_{TR0} - \delta_{PS0} + \frac{2 \gamma}{\beta} (\delta_{TR1} - \delta_{TR0})$ which is non-negative by \eqref{eq:cond1}.

    If instead $\delta_{TR1} < \delta_{TR0}$ holds, then $H(x^*_{\mathtt{EE}})$ is decreasing, and the infimum occurs at $x^*_{\mathtt{EE}} \rightarrow 1 - \frac{\gamma}{\beta}$, i.e., $H(1 - \frac{\gamma}{\beta}) = 2 \delta_{TR1} - \delta_{TR0} - \delta_{PS0} \geq 0$ which is again non-negative by \eqref{eq:cond1}. Thus, in either case $H(x^*_{\mathtt{EE}}) \geq 0$ along the endemic branch. From \eqref{eq:del_u_H}, we obtain the relations,
    \begin{align*}
        x^*_{\mathtt{EE}} < 0.5 &\implies \Delta \ut(x^*_{\mathtt{EE}}, 1 - \frac{\gamma}{\beta (1 - x^*_{\mathtt{EE}})}) \leq 0,
        \\x^*_{\mathtt{EE}} > 0.5 &\implies \Delta \ut(x^*_{\mathtt{EE}}, 1 - \frac{\gamma}{\beta (1 - x^*_{\mathtt{EE}})}) \geq 0
    \end{align*}
    which contradicts the characterization \eqref{eq:sigma<<infyFPP} in Proposition \ref{prop:finite_sigma_x^*}. Therefore we conclude that $z^*_{\mathtt{EE(int)}}$ is the only endemic equilibrium that exists.
\end{proof}
The existence condition derived in \eqref{eq:nc_x^*=0.5} also determines the sign of the sum $\delta_{TR0} + \delta_{PS0}$. Since $\beta - 2 \gamma > 0$ and $\delta_{TR1} + \delta_{PS1} > 0$ are true, it is necessary that $\delta_{TR0} + \delta_{PS0} < 0$ holds for the existence of SLE $\zt^*_{\mathtt{EE(int)}}$.

\subsection{Existence of SLE $\zt^*_{\mathtt{EE(\sigma)}}$}
We now prove the existence and uniqueness pertaining to the finite-$\sigma$ endemic equilibrium.
\begin{lemma}
    Let the basic reproduction number satisfy $1 < \mathcal{R}_0 \leq 2$, or $\mathcal{R}_0 > 2$ with violation of equality \eqref{eq:nc_x^*=0.5}. Then the finite-$\sigma$ endemic SLE $\zt^*_{\mathtt{EE(\sigma)}} := (x^*_{\mathtt{EE(\sigma)}}, \It^*_{\mathtt{EE(\sigma)}})$ exists when $\infty > \sigma > \max \big\{0, \sigma^{\mathtt{TH}} \big\}$ holds, where
    \begin{align}
        &\sigma^{\mathtt{TH}} := \frac{1}{(1 - \frac{\gamma}{\beta}) \delta_{TR1} + \frac{\gamma}{\beta} \delta_{PS1}} \times \mathtt{ln} (\frac{\gamma}{\beta - \gamma}). \label{eq:sigma^th_1}
    \end{align}
    Furthermore, suppose
    \begin{align}
        \min \bigg\{ \gamma (\delta_{TR0} &- \delta_{PS0}) + \beta (\delta_{TR1} - \delta_{TR0}), \nonumber\\ & \beta (\delta_{TR0} - \delta_{PS0}) + \gamma (\delta_{TR1} - \delta_{TR0}) \bigg \} \geq 0,
        \label{eq:min_suff_cond_x^*=x^*_sigmaEE}
    \end{align}
    also holds. Then $\zt^*_{\mathtt{EE(\sigma)}}$ is the only endemic SLE that exists.
    \label{lemm:x^*=x^*_ee}
\end{lemma}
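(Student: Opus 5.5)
The plan is to reduce the search for an endemic SLE to a scalar fixed-point problem along the endemic branch. On the endemic branch we substitute $\It = 1 - \frac{\gamma}{\beta(1-x)}$, which requires $x \in [0, 1 - \frac{\gamma}{\beta})$; this interval is nonempty because $\mathcal{R}_0 > 1$. We then define
\[
F(x;\sigma) := \frac{1}{1 + e^{\sigma \Delta \ut(x, 1 - \frac{\gamma}{\beta(1-x)})}} - x .
\]
An endemic SLE with finite $\sigma$ is exactly a zero of $F$ in $[0, 1 - \frac{\gamma}{\beta})$. By Proposition \ref{prop:finite_sigma_x^*}, such a zero lies in $(0,1)$. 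Under the hypotheses (either $\mathcal{R}_0 \le 2$, or $\mathcal{R}_0 > 2$ with \eqref{eq:nc_x^*=0.5} violated), the point $x = 0.5$ is not an endemic SLE. When $\mathcal{R}_0 \le 2$ it lies outside the branch. When $\mathcal{R}_0 > 2$, $\Delta \ut(0.5, 1 - \frac{2\gamma}{\beta}) \neq 0$. So any zero is automatically of type $\zt^*_{\mathtt{EE(\sigma)}}$.

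\textbf{Existence.} We use the intermediate value theorem on $F$. At $x = 0$ we have $F(0;\sigma) = \frac{1}{1+e^{\sigma \Delta\ut(0,\cdot)}} > 0$ for every finite $\sigma$. At the right end $x \to 1 - \frac{\gamma}{\beta}$, the infection level tends to $0$, so by \eqref{eq:diff_payoff}
\[
\Delta \ut \to \big(1-\tfrac{\gamma}{\beta}\big)\delta_{TR1} + \tfrac{\gamma}{\beta}\delta_{PS1} > 0 .
\]
Hence
\[
F \to \frac{1}{1+e^{\sigma[(1-\gamma/\beta)\delta_{TR1} + (\gamma/\beta)\delta_{PS1}]}} - \Big(1-\frac{\gamma}{\beta}\Big).
\]
This limit is negative exactly when
\[
\sigma\Big[\big(1-\tfrac{\gamma}{\beta}\big)\delta_{TR1}+\tfrac{\gamma}{\beta}\delta_{PS1}\Big] > \ln\!\Big(\frac{\gamma}{\beta-\gamma}\Big),
\]
that is, when $\sigma > \sigma^{\mathtt{TH}}$ from \eqref{eq:sigma^th_1}. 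Together with $\sigma > 0$ this gives the threshold $\max\{0,\sigma^{\mathtt{TH}}\}$. Continuity of $F$ on the branch then yields a zero strictly inside, with $\It^* > 0$.

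\textbf{Uniqueness.} The plan mirrors the proof of Lemma \ref{lemm:x^*=0.5_exist_unique}. We write $\Delta\ut$ along the branch using \eqref{eq:delta_U=H}, i.e.
\[
x\,\delta_{TR0} + (1-x)\,\delta_{PS0} + \frac{\gamma}{\beta}(\delta_{PS1}-\delta_{PS0}) + \frac{\gamma x}{\beta(1-x)}(\delta_{TR1}-\delta_{TR0}),
\]
and show that it is nondecreasing in $x$ under \eqref{eq:min_suff_cond_x^*=x^*_sigmaEE}. Its derivative is
\[
(\delta_{TR0}-\delta_{PS0}) + \frac{\gamma}{\beta(1-x)^2}(\delta_{TR1}-\delta_{TR0}).
\]
This is monotone in $x$, so its minimum over the branch is attained at an endpoint. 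At $x=0$ the derivative equals $\frac{1}{\beta}\big[\beta(\delta_{TR0}-\delta_{PS0}) + \gamma(\delta_{TR1}-\delta_{TR0})\big]$. At $x = 1-\frac{\gamma}{\beta}$ it equals $\frac{1}{\gamma}\big[\gamma(\delta_{TR0}-\delta_{PS0}) + \beta(\delta_{TR1}-\delta_{TR0})\big]$. Both are nonnegative by \eqref{eq:min_suff_cond_x^*=x^*_sigmaEE}. Therefore $x \mapsto 1/(1+e^{\sigma\Delta\ut})$ is nonincreasing, while $x$ is strictly increasing. It follows that $F$ is strictly decreasing and has at most one zero. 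Since $\zt^*_{\mathtt{EE(int)}}$ is excluded by hypothesis, $\zt^*_{\mathtt{EE(\sigma)}}$ is the unique endemic SLE.

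\textbf{Main obstacle.} The step I expect to be delicate is the endpoint/sign bookkeeping in the existence argument. The first issue is the sign of the logarithm: it is negative when $\mathcal{R}_0 > 2$, so $\sigma^{\mathtt{TH}} < 0$ and the constraint reduces to $\sigma > 0$. The second is that the right endpoint is excluded from the branch, since there $\It = 0$ and it is the DFE. For this reason I would argue via the limit and continuity rather than evaluating $F$ at the endpoint.
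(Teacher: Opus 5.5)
Your proposal is correct and follows essentially the same route as the paper: an intermediate-value argument for $F$ along the endemic branch, where the sign condition at $x = 1-\frac{\gamma}{\beta}$ yields exactly $\sigma > \sigma^{\mathtt{TH}}$, and uniqueness because $\frac{d\Delta\ut}{dx}$ along the branch is monotone in $x$ and nonnegative at both endpoints under \eqref{eq:min_suff_cond_x^*=x^*_sigmaEE}, so $F$ is strictly decreasing. Two small points tidy up the paper's version of this argument: you use $F(0)>0$ directly instead of the auxiliary $\eta$, and you check explicitly that $x=0.5$ cannot be a zero under the hypotheses.
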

\begin{proof}
    Recall that interior SLE $\zt^*_{\mathtt{EE(int)}}$ does not exist when \eqref{eq:nc_x^*=0.5} is not satisfied. We define  
    \begin{equation}
        \mathcal{F}(x) = \frac{1}{1 + e^{\sigma \Delta \ut\bigg(x, 1 - \frac{\gamma}{\beta (1 - x)} \bigg)}} - x,
        \label{eq:F(x*)}
    \end{equation}
    where $x$ lies in the interval $[\eta, 1 - \frac{\gamma}{\beta}]$ for a small $\eta > 0$, such that $\sigma \rightarrow \infty \Rightarrow \eta \rightarrow 0$. Function $\mathcal{F}(x)$ is continuous on $[\eta, 1 - \frac{\gamma}{\beta}]$. It can be shown that for every $\sigma$ there exists a small $\eta$ such that $\mathcal{F}(\eta) > 0$ holds. At $x = 1 - \frac{\gamma}{\beta}$, steady-state infected proportion is zero and the payoff difference is $$\Delta \ut(x, 0) = (1 - \frac{\gamma}{\beta}) \delta_{TR1} + \frac{\gamma}{\beta} \delta_{PS1} > 0.$$ Now, under \eqref{eq:min_suff_cond_x^*=x^*_sigmaEE} it is necessary and sufficient that $\mathcal{F}(1 - \frac{\gamma}{\beta})$ is strictly negative for the existence of a finite-$\sigma$ endemic equilibrium. Solving for $\mathcal{F}(1 - \frac{\gamma}{\beta}) < 0$ results in
    \begin{align}
        &\frac{1}{1 + e^{\sigma \Delta \ut\big(1 - \frac{\gamma}{\beta}, 0\big)}} - \bigg(1 - \frac{\gamma}{\beta}\bigg) < 0 \nonumber
        \\ \Rightarrow &e^{\sigma \Delta \ut\big(1 - \frac{\gamma}{\beta}, 0\big)} > \frac{\gamma}{\beta - \gamma} \nonumber
        \\ \Rightarrow &\sigma > \frac{1}{(1 - \frac{\gamma}{\beta}) \delta_{TR1} + \frac{\gamma}{\beta} \delta_{PS1}} \times \mathtt{ln} (\frac{\gamma}{\beta - \gamma}) =: \sigma^{\mathtt{TH}}, \nonumber
    \end{align}
    where we obtained the final inequality by substituting $\Delta \ut(1 - \frac{\gamma}{\beta}, 0) = (1 - \frac{\gamma}{\beta}) \delta_{TR1} + \frac{\gamma}{\beta} \delta_{PS1}$. Thus, $\sigma > \sigma^{\mathtt{TH}}$ guarantees the existence of at least one endemic SLE.

    To prove the uniqueness of $\zt^*_{\mathtt{EE(\sigma)}}$, we proceed as follows. Differentiating \eqref{eq:delta_U=H} we obtain
    \begin{equation}
        \frac{d \Delta \ut(\zt)}{d x} = \delta_{TR0} - \delta_{PS0} + \frac{\gamma (\delta_{TR1} - \delta_{TR0})}{\beta (1 - x)^2}.
        \label{eq:diff_U_x^*}
    \end{equation}
    The R.H.S of \eqref{eq:diff_U_x^*} is monotonic in $x$ and the sign does not change in the range of $x \in [\eta, 1 - \frac{\gamma}{\beta}]$. Furthermore, condition \eqref{eq:min_suff_cond_x^*=x^*_sigmaEE} ensures that both 
    $$\frac{d \Delta \ut(0)}{d x} = \frac{1}{\beta} \big(\beta (\delta_{TR0} - \delta_{PS0}) + \gamma (\delta_{TR1} - \delta_{TR0})\big) \geq 0, \textnormal{ and}$$ 
    $$\frac{d \Delta \ut(1 - \frac{\gamma}{\beta})}{d x} = \frac{1}{\gamma} \big(\gamma (\delta_{TR0} - \delta_{PS0}) + \beta (\delta_{TR1} - \delta_{TR0})\big) \geq 0,$$
    are satisfied, implying that $\frac{d \Delta \ut(x)}{d x} \geq 0$ in the entire interval of $[\eta, 1 - \frac{\gamma}{\beta}]$. Differentiating $\mathcal{F}(x)$ in \eqref{eq:F(x*)} w.r.t $x$ yields
    $$\mathcal{F}'(x) = -\frac{\sigma e^{\sigma \Delta \ut(x)}}{(1 + e^{\sigma \Delta \ut(x)})^2} \times \underbrace{\frac{d \Delta \ut(x)}{d x}}_{\geq 0} - 1,$$ which is strictly negative for all $x \in [\eta, 1 - \frac{\gamma}{\beta}]$. Since $\mathcal{F}(x)$ is strictly decreasing in $x$, it implies that existing finite-$\sigma$ SLE $\zt^*_{\mathtt{EE(\sigma)}}$ is unique. This concludes our proof.
\end{proof}

\subsection{Existence of SLE $\zt^*_{\mathtt{EE(cor)}}$}

We finally state the existence of the corner endemic SLE in the limiting case of $\sigma \rightarrow \infty$ as the following lemma.
\begin{lemma}
    Suppose the basic reproduction number satisfies $\mathcal{R}_0 > 1$. When individuals are completely rational, the corner endemic SLE $\zt^*_{\mathtt{EE(cor)}} := (x^*_{\mathtt{EE(cor)}} = 0, \It^*_{\mathtt{EE(cor)}} = 1 - \frac{\gamma}{\beta})$ exists when the inequality
    \begin{equation}
        \delta_{PS1} + \bigg(\frac{\beta}{\gamma} - 1 \bigg) \delta_{PS0} > 0
        \label{eq:delta2}
    \end{equation}
    holds. Furthermore, suppose
    \begin{equation}
        \min \bigg\{ \delta_{TR1} - \delta_{PS0}, \beta (\delta_{TR0} - \delta_{PS0}) + \gamma (\delta_{TR1} - \delta_{TR0}) \bigg \} > 0,
        \label{eq:min_suff_cond_x^*=0}
    \end{equation}
    also holds. Then $\zt^*_{\mathtt{EE(cor)}}$ is the only endemic SLE that exists.
    \label{lemm:x^*=0}
\end{lemma}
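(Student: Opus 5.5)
Under complete rationality, Proposition \ref{rem:x^*_char_sigma=infty} forces any endemic SLE to have either $x^*=0$ with $\Delta \ut(\zt^*)>0$, or $x^*=0.5$ with $\Delta \ut(\zt^*)=0$. The proof therefore has two parts. First, existence of $(0, 1-\frac{\gamma}{\beta})$ reduces to a sign condition on $\Delta\ut$. Second, uniqueness reduces to excluding the interior candidate $x^*=0.5$.

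For existence, assume $\mathcal{R}_0>1$, so that $\It^*_{\mathtt{EE(cor)}} = 1-\frac{\gamma}{\beta}\in(0,1)$. By Proposition \ref{prop:exist_sis} with $x^*_{\mathtt{EE}}=0$, this is an admissible (and LAS) infection equilibrium. It remains to check the behavioral fixed point. As $\sigma\to\infty$, the logit map $1/(1+e^{\sigma\Delta\ut})$ converges pointwise to $0$ exactly when $\Delta\ut>0$. Substituting $x=0$ and $\It=1-\frac{\gamma}{\beta}$ into \eqref{eq:diff_payoff} kills the $\delta_{TR}$ terms and gives
\begin{equation*}
\Delta\ut\Big(0,1-\tfrac{\gamma}{\beta}\Big)=\tfrac{\gamma}{\beta}\,\delta_{PS1}+\Big(1-\tfrac{\gamma}{\beta}\Big)\delta_{PS0}=\tfrac{\gamma}{\beta}\Big[\delta_{PS1}+\Big(\tfrac{\beta}{\gamma}-1\Big)\delta_{PS0}\Big].
\end{equation*}
This is positive if and only if \eqref{eq:delta2} holds, which proves existence.

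For uniqueness, the only other possible endemic SLE under complete rationality is the interior one. It would have $x^*=0.5$ and, through the second case of \eqref{eq:sigma=infyFPP}, satisfy $\Delta\ut(\zt^*)=0$ on the endemic branch, so in particular $\mathcal{R}_0>2$. I will reuse the branch parametrization \eqref{eq:delta_U=H}: define $g(x):=\Delta\ut\big(x,1-\frac{\gamma}{\beta(1-x)}\big)$ on $[0,1-\frac{\gamma}{\beta}]$. Its derivative is \eqref{eq:diff_U_x^*}, which is monotone in $x$ and thus sign-constant unless it crosses zero once. The plan is to show that \eqref{eq:min_suff_cond_x^*=0} forces $g(x)>0$ on the whole branch, so that no point with $\Delta\ut=0$, in particular none at $x=0.5$, can exist.

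To do this, first note that the second term of \eqref{eq:min_suff_cond_x^*=0} is exactly $\beta\, g'(0)>0$. Next, at the right endpoint we have $g(1-\frac{\gamma}{\beta})=(1-\frac{\gamma}{\beta})\delta_{TR1}+\frac{\gamma}{\beta}\delta_{PS1}>0$ by the standing assumptions, and $g(0)>0$ by \eqref{eq:delta2}. The remaining case is when $g'$ changes sign inside the interval, i.e.\ $g$ is concave ($\delta_{TR1}<\delta_{TR0}$); then a continuous function positive at both endpoints stays positive. If instead $g$ is convex ($\delta_{TR1}\ge\delta_{TR0}$), then $g'$ is increasing, so $g'(0)>0$ makes $g$ increasing from $g(0)>0$. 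In both cases $g>0$ on the branch.

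The main obstacle is the role of the first term $\delta_{TR1}-\delta_{PS0}>0$. The argument above seems to use only \eqref{eq:delta2} and $g'(0)>0$. I expect the first term is meant to bound $g(0.5)$ directly: $g(0.5)=0.5(\delta_{TR0}+\delta_{PS0})+\frac{\gamma}{\beta}(\delta_{TR1}+\delta_{PS1}-\delta_{TR0}-\delta_{PS0})$, and one would rewrite this combination via $\delta_{TR1}-\delta_{PS0}$ and $g'(0)$ to show it is positive. So I would first try to establish $g(0.5)\neq0$ using both terms, and fall back on the endpoint and monotonicity argument if that fails.
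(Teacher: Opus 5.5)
Your existence argument is the same as the paper's. Your uniqueness argument is correct but takes a different route.

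The paper never looks at the shape of the branch function. It uses \eqref{eq:delta2} to bound $\delta_{PS1}-\delta_{PS0} > -\frac{\beta}{\gamma}\delta_{PS0}$. Combined with $1-\It^*=\frac{\gamma}{\beta(1-x^*)}$, this gives the pointwise estimate $\Delta\ut(\zt^*) > x^*\big((1-\It^*)\delta_{TR1}+\It^*\delta_{TR0}-\delta_{PS0}\big)$ along the endemic branch. The bracket is affine in $\It^*\in[0,1-\frac{\gamma}{\beta}]$, and its two endpoint values are exactly the two terms of \eqref{eq:min_suff_cond_x^*=0}. At $\It^*=0$ it equals $\delta_{TR1}-\delta_{PS0}$, which is the binding one when $\delta_{TR0}>\delta_{TR1}$. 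At $\It^*=1-\frac{\gamma}{\beta}$ it equals $\frac{1}{\beta}\big(\beta(\delta_{TR0}-\delta_{PS0})+\gamma(\delta_{TR1}-\delta_{TR0})\big)$.

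You instead work with the exact function $g$ and the sign of $g''\propto \delta_{TR1}-\delta_{TR0}$. In the concave case, $g(0)>0$ and $g(1-\frac{\gamma}{\beta})>0$ already suffice. In the convex case, $g'(0)>0$ makes $g$ increasing from $g(0)>0$. The paper's estimate is quick and calculus-free, but it is lossy: it can be negative at points where $g$ is positive, which is why it needs both hypotheses. Your argument is sharper and shows that the hypothesis $\delta_{TR1}-\delta_{PS0}>0$ can be dropped from the lemma.

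So the ``obstacle'' in your last paragraph is not one. There is no need to bound $g(0.5)$ separately; the endpoint/curvature argument should simply be the proof. One wording fix: in the concave case the justification is concavity (a concave function lies above its chord), not continuity. A merely continuous function that is positive at both endpoints can still vanish in between.
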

\begin{proof}
   The proof is presented in the Appendix.
\end{proof}
Although our modeling assumption ensures that $\delta_{TR1}, \delta_{PS1} > 0$ hold, inequalities \eqref{eq:cond1}, \eqref{eq:min_suff_cond_x^*=x^*_sigmaEE} and \eqref{eq:min_suff_cond_x^*=0} introduce additional incentive conditions involving the game and disease parameters. While these sufficient conditions explore a part of the parameter space to guarantee the uniqueness of different SLE, characterizing remaining spaces is beyond the scope of this paper and remains an interesting direction for future work.

\section{Stability of SLE and Impact of Rationality}
\label{sec:stab&sigma_impact}

In this section, we present some of our main results. First, we show that there exist parameter regimes in which the disease-free equilibrium is unstable under imitative learning and under high-rationality logit learning, whereas a sufficiently low level of rationality stabilizes it. We also prove that in certain parameter regimes an unstable endemic equilibrium stabilizes when rationality falls below a threshold and analytically derive the threshold.

We begin by analyzing the local asymptotic stability of the disease-free equilibrium. To this end, we define the \emph{Jacobian} $\mathcal{J}$ at equilibrium $\zt^*$ as
\begin{align}
    &\mathcal{J}(\zt^*) := \begin{bmatrix}
        \frac{\partial \dot{\It}(\zt^*)}{\partial \It} & \frac{\partial \dot{\It}(\zt^*)}{\partial x}\\
        \frac{\partial \dot{x}(\zt^*)}{\partial \It} & \frac{\partial \dot{x}(\zt^*)}{\partial x}
    \end{bmatrix}, \nonumber
    %\\ &\textnormal{where } \nonumber 
    \\ &\frac{\partial \dot{\It}(\zt^*)}{\partial \It} = \beta (1 - x^*) (1 - 2 \It^*) - \gamma, 
    \label{eq:idot_i}
    \\ &\frac{\partial \dot{\It}(\zt^*)}{\partial x} = -\beta \It^* (1 - \It^*),
    \label{eq:idot_x}
    \\ &\frac{\partial \dot{x}(\zt^*)}{\partial \It} = -\frac{\sigma e^{\sigma \Delta \ut(\zt^*)}}{(1 + e^{\sigma \Delta \ut(\zt^*)})^2} \times \frac{\partial \Delta \ut(\zt^*)}{\partial \It},
    \label{eq:xdot_i}
    \\ &\frac{\partial \dot{x}(\zt^*)}{\partial x} = -\frac{\sigma e^{\sigma \Delta \ut(\zt^*)}}{(1 + e^{\sigma \Delta \ut(\zt^*)})^2} \times \frac{\partial \Delta \ut(\zt^*)}{\partial x} - 1.
    \label{eq:xdot_x}
\end{align}
Note that the disease-free equilibrium $\zt^*_{\mathtt{DFE}}$ is stable for all $\sigma \geq 0$ when $\gamma > \beta$ and $\delta_{TR1} \geq \delta_{PS1}$ hold. However, we consider a special case of $\gamma < \beta < 2 \gamma$, in which we show that both, high degree of rationality under logit dynamics and imitation learning under replicator dynamics result in the instability of $\zt^*_{\mathtt{DFE}}$, whereas a low rationality stabilizes the equilibrium.
\subsection{LAS of $\zt^*_{\mathtt{DFE}}$}
\begin{theorem}
    Suppose $1 < \mathcal{R}_0 < 2$ holds. In addition, let $\delta_{TR1} \geq \delta_{PS1}$ also hold. The disease-free SLE $\zt^*_{\mathtt{DFE}} = (x^*_{\mathtt{DFE}}(\sigma), 0)$ is LAS for $0 \leq \sigma < \sigma^{\mathtt{TH}}$ and unstable for $\sigma > \sigma^{\mathtt{TH}}$, where $\sigma^{\mathtt{TH}}$ is defined in \eqref{eq:sigma^th_1}, and
    \begin{align}
         &x^*_{\mathtt{DFE}}(\sigma) := \frac{1}{1 + e^{\sigma \Delta \ut(x^*_{\mathtt{DFE}}(\sigma), 0)}}. \label{eq:x^*DFE_sigma}
    \end{align}    
    \label{thm:dfe_stab}
\end{theorem}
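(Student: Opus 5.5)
The plan is to linearize at the disease-free SLE and observe that the Jacobian becomes triangular there. Stability then reduces to the sign of one scalar eigenvalue, and that sign is governed by a fixed-point comparison that produces $\sigma^{\mathtt{TH}}$.

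\textbf{Step 1: existence and uniqueness of $x^*_{\mathtt{DFE}}(\sigma)$.} On the disease-free branch $\It=0$, \eqref{eq:diff_payoff} gives $\Delta\ut(x,0)=x\,\delta_{TR1}+(1-x)\,\delta_{PS1}$. This is affine in $x$ with slope $\delta_{TR1}-\delta_{PS1}\geq 0$ by hypothesis. Define
\[
g(x):=\frac{1}{1+e^{\sigma\Delta\ut(x,0)}}-x \quad \text{on } [0,1].
\]
This function is continuous, with $g(0)>0$ and $g(1)<0$. Its derivative is $-\sigma\frac{e^{\sigma\Delta\ut}}{(1+e^{\sigma\Delta\ut})^2}(\delta_{TR1}-\delta_{PS1})-1\leq -1$, so $g$ is strictly decreasing. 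Hence \eqref{eq:x^*DFE_sigma} has a unique solution $x^*_{\mathtt{DFE}}(\sigma)\in(0,1)$ for every $\sigma\geq 0$, and $\zt^*_{\mathtt{DFE}}$ is well defined.

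\textbf{Step 2: the Jacobian at $\It^*=0$.} Using \eqref{eq:idot_x}, $\partial\dot{\It}/\partial x=-\beta\It^*(1-\It^*)=0$ at $\It^*=0$. So $\mathcal{J}(\zt^*_{\mathtt{DFE}})$ is lower triangular, and its eigenvalues are its diagonal entries:
\[
\lambda_1=\beta(1-x^*_{\mathtt{DFE}})-\gamma \quad \text{from \eqref{eq:idot_i}},
\]
\[
\lambda_2=-\sigma\frac{e^{\sigma\Delta\ut}}{(1+e^{\sigma\Delta\ut})^2}(\delta_{TR1}-\delta_{PS1})-1\leq -1 \quad \text{from \eqref{eq:xdot_x}}.
\]
Thus $\lambda_2<0$ always, and local asymptotic stability is equivalent to $\lambda_1<0$, i.e., $x^*_{\mathtt{DFE}}(\sigma)>c:=1-\frac{\gamma}{\beta}$. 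Instability follows from $\lambda_1>0$, i.e., $x^*_{\mathtt{DFE}}(\sigma)<c$. This is consistent with Proposition \ref{prop:exist_sis}.

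\textbf{Step 3: translate into a condition on $\sigma$ (the main step).} Since $g$ is strictly decreasing with unique zero $x^*_{\mathtt{DFE}}(\sigma)$, we have
\[
x^*_{\mathtt{DFE}}(\sigma)>c \iff g(c)>0 .
\]
Now $g(c)>0$ means $1+e^{\sigma\Delta\ut(c,0)}<1/c$. Equivalently, $e^{\sigma\Delta\ut(c,0)}<\frac{1-c}{c}=\frac{\gamma}{\beta-\gamma}$, where $\Delta\ut(c,0)=(1-\frac{\gamma}{\beta})\delta_{TR1}+\frac{\gamma}{\beta}\delta_{PS1}>0$.

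The hypothesis $1<\mathcal{R}_0<2$ gives $c\in(0,\tfrac12)$ and $\frac{\gamma}{\beta-\gamma}>1$. Therefore $\sigma^{\mathtt{TH}}$ in \eqref{eq:sigma^th_1} is strictly positive and finite. Taking logarithms, $g(c)>0$ holds exactly when $\sigma<\sigma^{\mathtt{TH}}$, which gives LAS. The reverse inequality $g(c)<0$ holds exactly when $\sigma>\sigma^{\mathtt{TH}}$, so $\lambda_1>0$ and the equilibrium is unstable.

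As a sanity check, at $\sigma=0$ we get $x^*=0.5>c$. The threshold case $\sigma=\sigma^{\mathtt{TH}}$ gives $\lambda_1=0$ and is excluded from the statement.

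The only delicate point is the monotonicity of $g$ used in Step 3. The assumption $\delta_{TR1}\geq\delta_{PS1}$ is exactly what provides it. Without it, $g$ could have several zeros and the equivalence with $g(c)$ would fail, so I would make the role of this assumption explicit.
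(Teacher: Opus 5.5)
Your proof is correct. Its first half coincides with the paper's: the Jacobian at $\It^*=0$ is triangular, $\lambda_2\le -1$, and stability reduces to the sign of $\beta(1-x^*_{\mathtt{DFE}})-\gamma$.

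The key step differs. The paper compares $x^*_{\mathtt{DFE}}(\sigma)$ with $1-\frac{\gamma}{\beta}$ by showing that $x^*_{\mathtt{DFE}}(\sigma)$ is strictly decreasing in $\sigma$. It implicitly differentiates $\dot x=0$ along the branch $\It=0$ and obtains $\frac{d x^*_{\mathtt{DFE}}}{d\sigma}=-\frac{\partial\dot x/\partial\sigma}{\partial\dot x/\partial x}<0$, using $\Delta\ut(x^*_{\mathtt{DFE}},0)>0$ and $\partial\dot x/\partial x<0$. It then anchors the branch at $x^*_{\mathtt{DFE}}(\sigma^{\mathtt{TH}})=1-\frac{\gamma}{\beta}$. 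You instead fix $\sigma$ and use the strict decrease of $g$ in $x$, so the comparison $x^*_{\mathtt{DFE}}(\sigma)\gtrless c$ is read off from the sign of $g(c)$, which is an explicit inequality in $\sigma$.

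Your route is more elementary and self-contained. It needs no implicit function theorem, and it proves the uniqueness of the disease-free fixed point that the paper uses tacitly. The paper needs that uniqueness both to speak of a single branch $x^*_{\mathtt{DFE}}(\sigma)$ and to assert that this branch passes through $1-\frac{\gamma}{\beta}$ exactly at $\sigma^{\mathtt{TH}}$.

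The paper's route also yields a comparative-statics fact it relies on later: disease-free cooperation strictly decreases with rationality. Your framework recovers this without any differentiability. For fixed $y\in[0,1]$, $g$ evaluated at $y$ is strictly decreasing in $\sigma$ because $\Delta\ut(y,0)>0$. Taking $y=x^*_{\mathtt{DFE}}(\sigma_1)$ and $\sigma_2>\sigma_1$ then gives $x^*_{\mathtt{DFE}}(\sigma_2)<x^*_{\mathtt{DFE}}(\sigma_1)$.
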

\begin{proof}
    Recall from the previous section that SLE $\zt^*_{\mathtt{DFE}}$ always exists, and from Proposition \ref{prop:exist_sis} that $\zt^*_{\mathtt{DFE}}$ is stable if and only if $$\gamma \geq \beta (1 - x^*_{\mathtt{DFE}}) \implies x^*_{\mathtt{DFE}} \geq 1 - \frac{\gamma}{\beta}$$ holds. Substituting $x^*_{\mathtt{DFE}} = 1 - \frac{\gamma}{\beta}$, $\Delta \ut(\zt^*_{\mathtt{DFE}}) = (1 - \frac{\gamma}{\beta}) \delta_{TR1} + \frac{\gamma}{\beta} \delta_{PS1}$ and $\sigma = \sigma^{\mathtt{TH}}$ in \eqref{eq:logit_eq_pt}, we obtain
    \begin{align}
        &\frac{1}{1 + e^{\sigma^{\mathtt{TH}} \big((1 - \frac{\gamma}{\beta}) \delta_{TR1} + \frac{\gamma}{\beta} \delta_{PS1}\big)}} = 1 - \frac{\gamma}{\beta} \nonumber
        \\ \implies &\sigma^{\mathtt{TH}} = \frac{1}{(1 - \frac{\gamma}{\beta}) \delta_{TR1} + \frac{\gamma}{\beta} \delta_{PS1}} \times \mathtt{ln} (\frac{\gamma}{\beta - \gamma}). \nonumber
    \end{align}
    Note that since $\sigma^{\mathtt{TH}}$ is positive, the above expression is valid when $\frac{\gamma}{\beta - \gamma} > 1 \iff \mathcal{R}_0 < 2$ holds.

    The expression $\frac{\partial \dot{\It}(\zt^*_{\mathtt{DFE}})}{\partial x}$ in \eqref{eq:idot_x} is zero at $\It^*_{\mathtt{DFE}} = 0$. Thus, $\mathcal{J}(\zt^*_{\mathtt{DFE}})$ at the disease-free equilibrium reduces to a lower triangular matrix, implying that its diagonal elements 
    $$\frac{\partial \dot{\It}(\zt^*_{\mathtt{DFE}})}{\partial \It} \textnormal{ and } \frac{\partial \dot{x}(\zt^*_{\mathtt{DFE}})}{\partial x}$$
    are the eigenvalues. Computing $\frac{\partial \Delta \ut(\zt^*)}{\partial x}$ from \eqref{eq:diff_payoff}, we obtain $$\frac{\partial \Delta \ut(x^* = x^*_{\mathtt{DFE}}(\sigma), \It^* = 0)}{\partial x} = \delta_{TR1} - \delta_{PS1}.$$ It is easy to see that at an equilibrium strategy of $x^* = x^*_{\mathtt{DFE}}(\sigma)$ the eigenvalue 
    \begin{equation}
        \frac{\partial \dot{x}(\zt^*_{\mathtt{DFE}})}{\partial x} = \! - \! \frac{\sigma e^{\sigma \Delta \ut(\zt^*_{\mathtt{DFE}})}}{(1 + e^{\sigma \Delta \ut(\zt^*_{\mathtt{DFE}})})^2} \times (\delta_{TR1} \! - \! \delta_{PS1}) \! - \! 1
        \label{eq:del_xdot_del_x^*_dfe}
    \end{equation}
    in \eqref{eq:xdot_x} is strictly negative for $\sigma \geq 0$ under the sufficient condition of $\delta_{TR1} - \delta_{PS1}$ being non-negative. The other eigenvalue
    \begin{equation}
        \frac{\partial \dot{\It}(x_{\mathtt{DFE}}(\sigma), 0)}{\partial \It} = \beta (1 - x^*_{\mathtt{DFE}}(\sigma)) - \gamma
        \label{eq:del_idot_del_i^*_dfe}
    \end{equation} 
    is strictly negative for $x^*_{\mathtt{DFE}}(\sigma) > 1 - \frac{\gamma}{\beta}$, equal to zero for $x^*_{\mathtt{DFE}}(\sigma) = 1 - \frac{\gamma}{\beta}$, and is positive when $x^*_{\mathtt{DFE}}(\sigma) < 1 - \frac{\gamma}{\beta}$ is true. It remains to show that $x^*_{\mathtt{DFE}}(\sigma)$ decreases monotonically with $\sigma$, and as the proportion $x^*_{\mathtt{DFE}}(\sigma)$ falls below $1 - \frac{\gamma}{\beta}$, the eigenvalue $\frac{\partial \dot{\It}(x^*_{\mathtt{DFE}}(\sigma), \It^* = 0)}{\partial \It}\bigg|_{\sigma > \sigma^{\mathtt{TH}}}$ becomes positive.
    
    In order to prove that $\frac{\partial \dot{\It}(x^*_{\mathtt{DFE}}(\sigma), \It^* = 0)}{\partial \It} = \beta (1 - x^*_{\mathtt{DFE}}(\sigma)) - \gamma$ is strictly positive when $\sigma$ is higher than the threshold $\sigma > \sigma^{\mathtt{TH}}$, it is sufficient to show that equilibrium $x^*_{\mathtt{DFE}}(\sigma)$ is strictly decreasing in $\sigma$. At equilibrium since $\dot{x} = 0$ holds, equality $\frac{d \dot{x} (x^*_{\mathtt{DFE}}(\sigma), \It^*(\sigma), \sigma)}{d \sigma} = 0$ must be satisfied, which implies
    \begin{align*}
        \frac{d \dot{x} (x^*_{\mathtt{DFE}}(\sigma), \It^*(\sigma), \sigma)}{d \sigma} &= \frac{\partial \dot{x}}{\partial x^*_{\mathtt{DFE}}} \cdot \frac{d x^*_{\mathtt{DFE}}}{d \sigma} 
        \\ &\quad + \frac{\partial \dot{x}}{\partial \It^*_{\mathtt{DFE}}} \cdot \frac{d \It^*_{\mathtt{DFE}}}{d \sigma} + \frac{\partial \dot{x}}{\partial \sigma} = 0.
    \end{align*}
    Since $\It^*_{\mathtt{DFE}} = 0$ holds in the entire disease-free branch, the above equation reduces to
    \begin{align}
         &\frac{\partial \dot{x}}{\partial x^*_{\mathtt{DFE}}} \cdot \frac{d x^*_{\mathtt{DFE}}}{d \sigma} + \frac{\partial \dot{x}}{\partial \sigma} = 0 \Rightarrow \frac{d x^*_{\mathtt{DFE}}}{d \sigma} = -\frac{\partial \dot{x} / \partial \sigma}{\partial \dot{x} / \partial x^*_{\mathtt{DFE}}}.
         \label{eq:d_xdot_d_sigma}
    \end{align}
    Now, from \eqref{eq:logit_dyn} we compute the partial derivative as $$\frac{\partial \dot{x}}{\partial \sigma} = - \frac{\Delta \ut \cdot e^{\sigma \Delta \ut}}{(1 + e^{\sigma \Delta \ut})^2},$$ which is strictly negative at $\It^*_{\mathtt{DFE}} = 0$ (since, $\Delta \ut(x^*_{\mathtt{DFE}}(\sigma), 0) = x^*_{\mathtt{DFE}}(\sigma) \delta_{TR1} + (1 - x^*_{\mathtt{DFE}}(\sigma)) \delta_{PS1} > 0$). The term $\frac{\partial \dot{x}(\zt^*_{\mathtt{DFE}})}{\partial x}$ from \eqref{eq:del_xdot_del_x^*_dfe} is also negative. Therefore, $\frac{d x^*_{\mathtt{DFE}}(\sigma)}{d \sigma}$ in \eqref{eq:d_xdot_d_sigma} is strictly negative. Since $x^*_{\mathtt{DFE}}(\sigma) = 1 - \frac{\gamma}{\beta}$ holds at $\sigma = \sigma^{\mathtt{TH}}$, thus for any lower value of $\sigma$ satisfying $0 \leq \sigma < \sigma^{\mathtt{TH}}$ the equilibrium fraction satisfies $x^*_{\mathtt{DFE}}(\sigma) > 1 - \frac{\gamma}{\beta}$. Consequently, the eigenvalue $\frac{\partial \dot{\It}(x^*_{\mathtt{DFE}}(\sigma), \It^* = 0)}{\partial \It}$ in \eqref{eq:del_idot_del_i^*_dfe} becomes negative and the disease-free SLE stabilizes. When $\sigma$ exceeds the threshold $\sigma^{\mathtt{TH}}$, the fraction $x^*_{\mathtt{DFE}}(\sigma)$ reduces and $\frac{\partial \dot{\It}(x^*_{\mathtt{DFE}}(\sigma), \It^* = 0)}{\partial \It}$ becomes strictly positive, i.e., the disease-free SLE $\zt^*_{\mathtt{DFE}}$ becomes unstable. This concludes our proof.
\end{proof}

The above theorem states that bounded rationality can stabilize a disease-free equilibrium in a parameter regime in which sufficiently high rationality destabilizes it. The condition $\delta_{TR1} \geq \delta_{PS1}$ means that in a disease-free environment, the payoff advantage of defection is not lower when interacting with a cooperator than with a defector. This is justified since in the absence of an infection, the incentive of defection increases when interacting with a cooperator compared to a defector, reflecting a free-riding effect.

While we discuss the LAS of $\zt^*_{\mathtt{DFE}}$ at $\sigma \! < \! \sigma^{\mathtt{TH}}$ and instability at $\sigma \! > \! \sigma^{\mathtt{TH}}$, disease-free equilibrium is non-hyperbolic at $\sigma \! = \! \sigma^{\mathtt{TH}}$ since dominant eigenvalue is zero. Linearization does not establish LAS at a threshold that requires separate analysis, which is beyond the scope of this paper.

\begin{remark}
    We also briefly examine the disease-free equilibrium when the revision protocol is replicator dynamics. At the disease-free equilibrium, $\Delta \ut (x^*_{\mathtt{DFE}}, 0) = x^*_{\mathtt{DFE}} \delta_{TR1} + (1 - x^*_{\mathtt{DFE}}) \delta_{PS1} > 0$. The replicator dynamics in \eqref{eq:repl_dyn} drives $x^*$ to $0$. Consequently the eigenvalue $\frac{\partial \dot{\It}(0, 0)}{\partial \It} = \beta - \gamma$ is strictly positive when $1 < \mathcal{R}_0 < 2$ holds, implying that the disease-free equilibrium is unstable. While in this entire regime the disease-free equilibrium is unstable under replicator dynamics, logit dynamics at a low rationality stabilizes the equilibrium. 
\end{remark}

\subsection{LAS of $\zt^*_{{\mathtt{EE(int)}}}$}

Next, we study the LAS of the SLE $\zt^*_{{\mathtt{EE(int)}}} = (0.5, 1 - \frac{2 \gamma}{\beta})$. To investigate the stability, we introduce the following quantities.
\begin{align}
    \mathcal{A} &:= 2 \gamma (\delta_{TR1} - \delta_{PS1}) + (\beta - 2 \gamma) (\delta_{TR0} - \delta_{PS0}), \label{eq:A}
    %\\ \mathcal{B} &:= -0.5 \mathcal{A} - \frac{\beta \gamma (\delta_{TR1} + \delta_{PS1})}{\beta - 2 \gamma}, \label{eq:B}
    \\ \sigma^{\mathtt{c}} &:= \frac{4 \beta (\gamma - 0.5 \beta - 1)}{\mathcal{A}}.  \label{eq:sigma_c}
\end{align}
We now state the following theorem.
\begin{theorem}
    Suppose $\mathcal{R}_0 > 2$ holds. Furthermore, let \eqref{eq:cond1} and \eqref{eq:nc_x^*=0.5} also hold. Then the following conditions are true regarding the stability of the unique endemic SLE $\zt^*_{{\mathtt{EE(int)}}} = (0.5, 1 - \frac{2 \gamma}{\beta})$. If
    \begin{itemize}
        \item $\mathcal{A} \geq 0$, then $\zt^*_{{\mathtt{EE(int)}}}$ is LAS for any $\sigma \geq 0$; and
        \item $\mathcal{A} < 0$, then $\zt^*_{{\mathtt{EE(int)}}}$ is LAS for $\sigma < \sigma^{\mathtt{c}}$ and unstable for $\sigma > \sigma^{\mathtt{c}}$.
    \end{itemize}
    \label{thm:stab_x=0.5}
\end{theorem}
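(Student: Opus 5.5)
The plan is to linearize the coupled system \eqref{eq:sis_dyn}, \eqref{eq:logit_dyn} at $\zt^*_{\mathtt{EE(int)}}=(0.5,1-\frac{2\gamma}{\beta})$ using \eqref{eq:idot_i}--\eqref{eq:xdot_x}. We then apply the planar criterion: the equilibrium is LAS if $\mathrm{tr}\,\mathcal{J}<0$ and $\det\mathcal{J}>0$, and unstable if $\det\mathcal{J}>0$ and $\mathrm{tr}\,\mathcal{J}>0$. Uniqueness of the endemic SLE is given by Lemma \ref{lemm:x^*=0.5_exist_unique}, so only the stability claim needs proof. Since $\Delta\ut(\zt^*_{\mathtt{EE(int)}})=0$ by \eqref{eq:nc_x^*=0.5}, the logistic factor $\frac{\sigma e^{\sigma\Delta\ut}}{(1+e^{\sigma\Delta\ut})^2}$ equals $\sigma/4$, which makes every entry explicit.

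First I compute the four entries. Write $I^*=1-\frac{2\gamma}{\beta}$.
\begin{itemize}
\item From \eqref{eq:idot_i}, $\mathcal{J}_{11}=\frac{\beta}{2}(1-2I^*)-\gamma=\gamma-\frac{\beta}{2}<0$, using $\mathcal{R}_0>2$.
\item From \eqref{eq:idot_x}, $\mathcal{J}_{12}=-\beta I^*(1-I^*)=-\frac{2\gamma(\beta-2\gamma)}{\beta}<0$.
\item From \eqref{eq:diff_payoff}, $\partial\Delta\ut/\partial\It=\frac12(\delta_{TR0}+\delta_{PS0}-\delta_{TR1}-\delta_{PS1})$. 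Eliminating $\delta_{TR0}+\delta_{PS0}$ with \eqref{eq:nc_x^*=0.5} gives $\partial\Delta\ut/\partial\It=-\frac{\beta(\delta_{TR1}+\delta_{PS1})}{2(\beta-2\gamma)}<0$. Hence $\mathcal{J}_{21}=\frac{\sigma\beta(\delta_{TR1}+\delta_{PS1})}{8(\beta-2\gamma)}\ge 0$.
\item Also $\partial\Delta\ut/\partial x=(1-I^*)\delta_{TR1}-(1-I^*)\delta_{PS1}+I^*(\delta_{TR0}-\delta_{PS0})=\mathcal{A}/\beta$ by \eqref{eq:A}. Hence $\mathcal{J}_{22}=-\frac{\sigma\mathcal{A}}{4\beta}-1$.
\end{itemize}

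The trace step follows directly. We have $\mathrm{tr}\,\mathcal{J}=\gamma-\frac{\beta}{2}-1-\frac{\sigma\mathcal{A}}{4\beta}$.
\begin{itemize}
\item If $\mathcal{A}\ge0$, the trace is negative for every $\sigma\ge0$.
\item If $\mathcal{A}<0$, the condition $\mathrm{tr}\,\mathcal{J}<0$ is equivalent to $\sigma<\frac{4\beta(\gamma-0.5\beta-1)}{\mathcal{A}}=\sigma^{\mathtt{c}}$, which is positive because both numerator and denominator are negative. For $\sigma>\sigma^{\mathtt{c}}$ the trace is positive.
\end{itemize}
So both bullets of the theorem follow once we show $\det\mathcal{J}>0$ for all $\sigma\ge0$.

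The determinant is the main obstacle, because when $\mathcal{A}<0$ the term $\mathcal{J}_{11}\mathcal{J}_{22}$ can have either sign. Expanding,
\[
\det\mathcal{J}=(\tfrac{\beta}{2}-\gamma)\Big(1+\tfrac{\sigma\mathcal{A}}{4\beta}\Big)+\tfrac{\sigma\gamma}{4}(\delta_{TR1}+\delta_{PS1}).
\]
To control this, I use identity \eqref{eq:del_u_H} from the proof of Lemma \ref{lemm:x^*=0.5_exist_unique}. Along the endemic branch $\It=1-\frac{\gamma}{\beta(1-x)}$ it states $\Delta\ut=\frac12(2x-1)H(x)$. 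Differentiating at $x=0.5$ and using $d\It/dx=-4\gamma/\beta$ there gives
\[
H(0.5)=\frac{\mathcal{A}}{\beta}+\frac{2\gamma(\delta_{TR1}+\delta_{PS1})}{\beta-2\gamma}.
\]
Substituting this into the expansion should give the clean factorization
\[
\det\mathcal{J}=(\tfrac{\beta}{2}-\gamma)\Big(1+\tfrac{\sigma}{4}H(0.5)\Big).
\]
Lemma \ref{lemm:x^*=0.5_exist_unique} shows that \eqref{eq:cond1} forces $H\ge0$ on the whole endemic branch, so $H(0.5)\ge0$. Together with $\mathcal{R}_0>2$ this yields $\det\mathcal{J}>0$ for all $\sigma\ge0$, and combining with the trace analysis completes the argument. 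The case $\sigma=\sigma^{\mathtt{c}}$ (zero trace) is non-hyperbolic and is excluded, as in Theorem \ref{thm:dfe_stab}.
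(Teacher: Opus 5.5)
Your proposal is correct and follows essentially the same route as the paper. Both compute the same Jacobian entries, get $\sigma^{\mathtt{c}}$ from the same trace condition, and reduce the determinant to $(\tfrac{\beta}{2}-\gamma)\bigl(1+\tfrac{\sigma}{4}H(0.5)\bigr)$, with positivity coming from $H\ge 0$ in Lemma~\ref{lemm:x^*=0.5_exist_unique}; the only cosmetic difference is that you get the $H(0.5)$ identity by differentiating \eqref{eq:del_u_H} at $x=0.5$, whereas the paper rearranges the determinant inequality directly into \eqref{eq:new_eq_H}.
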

\begin{proof}
    Under the conditions stated in Lemma \ref{lemm:x^*=0.5_exist_unique}, the unique SLE $\zt^*_{\mathtt{EE(int)}}$ exists. We first note that conditions \eqref{eq:nc_x^*=0.5} and \eqref{eq:cond1} are compatible with both $\mathcal{A} \geq 0$ and $\mathcal{A} < 0$. Adding \eqref{eq:A} and \eqref{eq:nc_x^*=0.5} yields $\mathcal{A} = 4\gamma\delta_{TR1} + 2(\beta-2\gamma)\delta_{TR0}$, so that $\mathcal{A} \geq 0$ if and only if $\delta_{TR0} \geq -\frac{2\gamma}{\beta-2\gamma}\delta_{TR1}$. Under \eqref{eq:nc_x^*=0.5}, condition \eqref{eq:cond1} imposes only the lower bound
    \begin{equation*}
    \delta_{TR0} \geq -\frac{2\gamma}{\beta-2\gamma}(\delta_{TR1}+\delta_{PS1}) + \frac{\gamma}{\beta-\gamma}\,\delta_{PS1},
    \end{equation*}
    which is strictly smaller than $-\frac{2\gamma}{\beta-2\gamma}\delta_{TR1}$. Hence, $\delta_{TR0}$ may lie on either side of this value, and both cases of the theorem are admissible. 

    Now, computing the Jacobian at $\zt^*_{\mathtt{EE(int)}} = (0.5, 1 - \frac{2 \gamma}{\beta})$ from equations \eqref{eq:idot_i}-\eqref{eq:xdot_x} yields
    \begin{align}
        &\frac{\partial \dot{\It}(\zt^*_{\mathtt{EE(int)}})}{\partial \It} = \gamma - 0.5 \beta, \quad \frac{\partial \dot{\It}(\zt^*_{\mathtt{EE(int)}})}{\partial x} \! = \! -2 \gamma \bigg(1 - \frac{2 \gamma}{\beta}\bigg), \nonumber
    \\ &\frac{\partial \dot{x}(\zt^*_{\mathtt{EE(int)}})}{\partial \It} = -0.125 \sigma (\delta_{TR0} + \delta_{PS0} - \delta_{TR1} - \delta_{PS1}), \nonumber
    \\ &\frac{\partial \dot{x}(\zt^*_{\mathtt{EE(int)}})}{\partial x} = -0.25 \sigma \bigg(\!\frac{2 \gamma}{\beta} (\delta_{TR1} \! - \! \delta_{PS1}) \nonumber
    \\ &\qquad \qquad \quad \quad + \big(1  - \frac{2 \gamma}{\beta} \big) (\delta_{TR0} - \delta_{PS0}) \bigg) - 1. \nonumber
    \end{align}
    The SLE $(0.5, 1 - \frac{2 \gamma}{\beta})$ is stable if and only if $$\mathtt{Tr}[\mathcal{J}] = \frac{\partial \dot{\It}}{\partial \It} + \frac{\partial \dot{x}}{\partial x} < 0, \textnormal{ and}$$ $$\mathtt{Det}[\mathcal{J}] = \frac{\partial \dot{\It}}{\partial \It} \cdot \frac{\partial \dot{x}}{\partial x} - \frac{\partial \dot{\It}}{\partial x} \cdot \frac{\partial \dot{x}}{\partial \It} > 0$$ simultaneously hold (we have suppressed the dependency on $\zt^*_{\mathtt{EE(int)}}$ for brevity). We first compute $\mathtt{Tr}[\mathcal{J}]$ as
    \begin{align}
        &\mathtt{Tr}[\mathcal{J}] = \gamma - 0.5 \beta -0.25 \sigma \bigg(\!\frac{2 \gamma}{\beta} (\delta_{TR1} \! - \! \delta_{PS1}) \nonumber
    \\ &\qquad \qquad \quad + \! \big(\!1 \! - \! \frac{2 \gamma}{\beta} \! \big) \! (\delta_{TR0} \! - \! \delta_{PS0}) \! \bigg) \! - \! 1 < 0 \nonumber
    \\ \Leftrightarrow &\sigma \bigg(\underbrace{2 \gamma (\delta_{TR1} \! - \! \delta_{PS1}) + \! \big(\beta - 2 \gamma \big) \! (\delta_{TR0} - \delta_{PS0})}_{\mathcal{A}} \bigg) \nonumber
    \\ &\qquad \qquad \quad \; > 4 \beta \underbrace{(\gamma - 0.5 \beta - 1)}_{< 0}.
    \label{eq:trace_cond}
    \end{align}
    When $\mathcal{A} \geq 0$ holds, \eqref{eq:trace_cond} is satisfied for all values of $\sigma \geq 0$.
    %, which is the first item of this theorem (the stability proof is still incomplete when $\mathcal{A} \geq 0$, since we have not yet shown that $\mathtt{Det}[\mathcal{J}(\zt^*)] > 0$). 
    Otherwise, when $\mathcal{A} < 0$ holds, inequality \eqref{eq:trace_cond} reduces to $$\sigma < \frac{4 \beta (\gamma - 0.5 \beta - 1)}{\mathcal{A}} =: \sigma^{\mathtt{c}}.$$
    Thus, for rationality parameters satisfying $\sigma > \sigma^{\mathtt{c}}$ under the constraint of $\mathcal{A} < 0$, the SLE $\zt^*_{\mathtt{EE(int)}}$ is unstable. Finally, we compute $\mathtt{Det}[\mathcal{J}]$ as
    \begin{small}
    \begin{align}
        &\mathtt{Det}[\mathcal{J}] = (\gamma - 0.5 \beta) \big( \frac{-0.25 \sigma \mathcal{A}}{\beta} - 1 \big) - 0.25 \sigma \gamma \nonumber
        \\ &\qquad \qquad \qquad \times \big(\! 1 \! - \! \frac{2 \gamma}{\beta}\! \big) (\delta_{TR0} \! + \! \delta_{PS0} \! - \! \delta_{TR1} \! - \! \delta_{PS1}) > 0 \nonumber
        \\ \Leftrightarrow &0.5 (\underbrace{2 \gamma - \beta}_{< 0 \textnormal{ as } \mathcal{R}_0 > 2})  \big( -0.25 \sigma \mathcal{A} - \beta \big) + 0.25 \sigma \gamma \nonumber
        \\ &\qquad \qquad \qquad \times(2 \gamma - \beta) (\delta_{TR0} \! + \! \delta_{PS0} \! - \! \delta_{TR1} \! - \! \delta_{PS1}) > 0 \nonumber 
        \\ \Leftrightarrow &2  \big( -0.25 \sigma \mathcal{A} - \beta \big) +\sigma \gamma  (\delta_{TR0} \! + \! \delta_{PS0} \! - \! \delta_{TR1} \! - \! \delta_{PS1}) < 0 \nonumber 
        \\ \Leftrightarrow &-2 \beta + \sigma \big(-2 \gamma (\delta_{TR1} - \delta_{TR0}) - 0.5 \beta (\delta_{TR0} - \delta_{PS0}) \big) < 0,
        \label{eq:int_det_cond}
    \end{align}
    \end{small}
    where \eqref{eq:int_det_cond} is obtained by substituting $\mathcal{A}$ from \eqref{eq:A}. Now, computing $-2 \beta - 0.5 \beta \sigma H(0.5)$ by substituting $x^*_{\mathtt{EE(int)}} = 0.5$ in \eqref{eq:H_x} we obtain
    \begin{align}
        -2 \beta - 0.5 \beta \sigma H(0.5) &= -2 \beta + \sigma \big(-2 \gamma (\delta_{TR1} - \delta_{TR0}) \nonumber
        \\ &\quad - 0.5 \beta (\delta_{TR0} - \delta_{PS0}) \big), 
        \label{eq:new_eq_H}
    \end{align}
which is exactly the L.H.S of inequality \eqref{eq:int_det_cond}. Now, in Lemma \ref{lemm:x^*=0.5_exist_unique} we proved that $H(x^*_{\mathtt{EE}}) \geq 0$ along the endemic branch, implying that $H(0.5) \geq 0$ under \eqref{eq:nc_x^*=0.5} and \eqref{eq:cond1}. Therefore, the expression in \eqref{eq:new_eq_H} is strictly negative. Thus, we conclude that inequality \eqref{eq:int_det_cond} is satisfied for all $\sigma$. Consequently, $\mathtt{Det}[\mathcal{J}] > 0$ holds for all $\sigma$. Recall that when $\mathcal{A} < 0$, \eqref{eq:trace_cond} holds if and only if $\sigma < \sigma^{\mathtt{c}}$, hence SLE $\zt^*_{\mathtt{EE(int)}}$ is LAS if $\sigma < \sigma^{\mathtt{c}}$ and unstable for $\sigma > \sigma^{\mathtt{c}}$. At $\sigma = \sigma^{\mathtt{c}}$, the linearization is non-hyperbolic and the theorem makes no stability claim. Furthermore, when $\mathcal{A} \geq 0$ holds, SLE $\zt^*_{\mathtt{EE(int)}}$ is LAS for all $\sigma$. This completes the stability proof.
\end{proof}

Theorem \ref{thm:stab_x=0.5} highlights a very important aspect of logit learning dynamics. When the payoffs and disease parameters are such that $\Delta \ut(0.5, 1 - \frac{2 \gamma}{\beta}) = 0$ holds true at steady-state, then there exists a certain condition (i.e., $\mathcal{A} < 0$) under which the fixed point is unstable when individuals are highly rational. High rationality makes the logit response highly sensitive to small payoff differences, thereby amplifying behavioral responses to perturbations. Thus, a slight increase (or, decrease) in infection level causes a fluctuation in $\Delta \ut$ which in turn drives $x(t)$ away from the equilibrium $x^*_{\mathtt{EE(int)}}=0.5$. However, since agents of lower rationality have less sensitivity to the slight fluctuations in payoffs, $x^*_{\mathtt{EE(int)}} = 0.5$ remains a stable finite-$\sigma$ equilibrium.

\subsection{LAS of $\zt^*_{\mathtt{EE(\sigma)}}$}
\label{sec:LAS_x^*=0}
Finally, we analyze the LAS of the finite-$\sigma$ endemic SLE $\zt^*_{\mathtt{EE(\sigma)}} = (x^*_{\mathtt{EE(\sigma)}}, \It^*_{\mathtt{EE(\sigma)}})$. 
\begin{theorem}
    Let $\mathcal{R}_0$ satisfy $1 < \mathcal{R}_0 \leq 2$, or $\mathcal{R}_0 > 2$ with violation of equality \eqref{eq:nc_x^*=0.5}. Suppose $\infty > \sigma > \max \{0, \sigma^{\mathtt{TH}} \}$, condition \eqref{eq:min_suff_cond_x^*=x^*_sigmaEE} and 
    \begin{align}
        \min \Big\{\delta_{TR1} - \delta_{PS1}, &\gamma (\delta_{TR1} - \delta_{PS1}) \nonumber
        \\ &+ (\beta - \gamma) (\delta_{TR0} -\delta_{PS0})\Big\}\geq 0
        \label{eq:x^*=x^*ee_stability}
    \end{align}
    hold. Then the unique finite-$\sigma$ endemic SLE $\zt^*_{\mathtt{EE(\sigma)}} = (x^*_{\mathtt{EE(\sigma)}}, \It^*_{\mathtt{EE(\sigma)}})$ is LAS.
    \label{thm:x^*=x^*ee_stable}
\end{theorem}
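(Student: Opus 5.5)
We would follow the same route as in the proof of Theorem \ref{thm:stab_x=0.5}. We linearize at $\zt^*_{\mathtt{EE(\sigma)}}$ using \eqref{eq:idot_i}--\eqref{eq:xdot_x}, and then verify the Routh--Hurwitz conditions $\mathtt{Tr}[\mathcal{J}]<0$ and $\mathtt{Det}[\mathcal{J}]>0$. Existence and uniqueness of $\zt^*_{\mathtt{EE(\sigma)}}$ come for free from Lemma \ref{lemm:x^*=x^*_ee}, since its hypotheses ($\sigma>\max\{0,\sigma^{\mathtt{TH}}\}$ and \eqref{eq:min_suff_cond_x^*=x^*_sigmaEE}) are assumed.

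First we simplify the entries using the endemic relation $\beta(1-x^*)(1-\It^*)=\gamma$. This gives
\[
\frac{\partial \dot{\It}}{\partial \It}=\gamma-\beta(1-x^*)\It^*-\gamma=-\beta(1-x^*)\It^*<0,
\]
because $\It^*>0$ and $x^*<1$. We also have $\partial\dot{\It}/\partial x=-\beta\It^*(1-\It^*)<0$. Write $\kappa:=\sigma e^{\sigma\Delta\ut}/(1+e^{\sigma\Delta\ut})^2 = \sigma x^*(1-x^*)\ge 0$. From \eqref{eq:diff_payoff},
\[
\partial_x\Delta\ut=(1-\It^*)(\delta_{TR1}-\delta_{PS1})+\It^*(\delta_{TR0}-\delta_{PS0}),
\]
which is an affine function of $\It^*$ on $[0,1-\gamma/\beta]$. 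Its value at $\It^*=0$ is $\delta_{TR1}-\delta_{PS1}\ge 0$. At $\It^*=1-\gamma/\beta$ it equals $\beta^{-1}\big(\gamma(\delta_{TR1}-\delta_{PS1})+(\beta-\gamma)(\delta_{TR0}-\delta_{PS0})\big)\ge 0$. Both bounds hold by \eqref{eq:x^*=x^*ee_stability}, and since $\It^*_{\mathtt{EE(\sigma)}}\in(0,1-\gamma/\beta)$ we get $\partial_x\Delta\ut\ge 0$. Hence $\partial\dot x/\partial x=-\kappa\,\partial_x\Delta\ut-1\le -1$, and the trace is strictly negative for every $\sigma$.

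For the determinant, we write
\[
\mathtt{Det}=\beta(1-x^*)\It^*\big(1+\kappa\,\partial_x\Delta\ut\big)-\beta\It^*(1-\It^*)\,\kappa\,\partial_{\It}\Delta\ut .
\]
The first term is positive. The difficulty is that $\partial_{\It}\Delta\ut=x(\delta_{TR0}-\delta_{TR1})+(1-x)(\delta_{PS0}-\delta_{PS1})$ has no definite sign. To handle this, we would use the total derivative along the endemic branch, as in \eqref{eq:diff_U_x^*}:
\[
\frac{d\Delta\ut}{dx}=\partial_x\Delta\ut+\partial_{\It}\Delta\ut\cdot\frac{d\It^*}{dx}, \qquad \frac{d\It^*}{dx}=-\frac{\gamma}{\beta(1-x)^2}=-\frac{1-\It^*}{1-x}.
\]
This lets us eliminate $\partial_{\It}\Delta\ut$: we have $(1-\It^*)\partial_{\It}\Delta\ut=(1-x^*)\big(\partial_x\Delta\ut-d\Delta\ut/dx\big)$. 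Substituting,
\[
\mathtt{Det}=\beta(1-x^*)\It^*\Big(1+\kappa\,\frac{d\Delta\ut}{dx}\Big).
\]
Condition \eqref{eq:min_suff_cond_x^*=x^*_sigmaEE} gives $d\Delta\ut/dx\ge0$ on the branch, as shown in the proof of Lemma \ref{lemm:x^*=x^*_ee}, so $\mathtt{Det}>0$.

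The step we expect to be the main obstacle is this determinant reduction, in particular carrying out the algebra cleanly with the correct sign of $d\It^*/dx$. It is also worth noting that $1+\kappa\,d\Delta\ut/dx=-\mathcal{F}'(x^*)$, which connects the result to the strict monotonicity of $\mathcal{F}$ already established. With both Routh--Hurwitz conditions holding, the hyperbolic equilibrium is LAS, which concludes the plan.
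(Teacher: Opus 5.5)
Your proposal is correct and matches the paper's proof. The trace is handled through the endemic relation $\beta(1-x^*)(1-\It^*)=\gamma$ together with the affine-in-$\It^*$ endpoint bound from \eqref{eq:x^*=x^*ee_stability}, and the determinant is reduced, exactly as you do, to $\beta(1-x^*)\It^*\big(1+\kappa\, d\Delta\ut/dx\big)$ via the chain rule along the endemic branch and the nonnegativity of $d\Delta\ut/dx$ from Lemma \ref{lemm:x^*=x^*_ee}; your remark that this factor equals $-\mathcal{F}'(x^*)$ is a useful link to the uniqueness argument that the paper leaves implicit.
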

\begin{proof}
    In an endemic equilibrium, \eqref{eq:sis_dyn} reduces to $\beta (1 - x^*_{\mathtt{EE(\sigma)}}) (1 - \It^*_{\mathtt{EE(\sigma)}}) = \gamma$, which we substitute in \eqref{eq:idot_i} to obtain 
    \begin{equation}
        \frac{\partial \dot{\It}(\zt^*_{\mathtt{EE(\sigma)}})}{\partial \It} = -\beta (1 - x^*_{\mathtt{EE(\sigma)}}) \It^*_{\mathtt{EE(\sigma)}}.
        \label{eq:x*ee_idot_i}
    \end{equation}
    From \eqref{eq:diff_payoff}, we obtain 
    $$\frac{\partial \Delta \ut \!(\!\zt^*_{\mathtt{EE(\sigma)}}\!)}{\partial x} \!\!= \!\! (1 \!- \! \It^*_{\mathtt{EE(\sigma)}} \!)(\delta_{TR1} \! - \! \delta_{PS1} \!) \! + \! \It^*_{\mathtt{EE(\sigma)}} \! (\delta_{TR0} \! - \! \delta_{PS0} \!),$$
    which is non-negative under the condition \eqref{eq:x^*=x^*ee_stability}, since $\frac{\partial \Delta \ut (\zt^*_{\mathtt{EE(\sigma)}})}{\partial x}$ is affine in $\It^*_{\mathtt{EE(\sigma)}}$ with endpoints as $\delta_{TR1} - \delta_{PS1}$ and $\frac{1}{\beta} \Big(\gamma (\delta_{TR1} - \delta_{PS1}) + (\beta - \gamma) (\delta_{TR0} -\delta_{PS0}) \Big)$ (the reasoning is similar to the uniqueness proof in Lemma \ref{lemm:x^*=0}, and hence is omitted). Thus, the trace of $\mathcal{J}(\zt^*_{\mathtt{EE(\sigma)}})$ is
    \begin{align*}
        \mathtt{Tr}[\mathcal{J}] &= -\beta (1 - x^*_{\mathtt{EE(\sigma)}}) \It^*_{\mathtt{EE(\sigma)}} 
        \\ &\quad- \frac{\sigma e^{\sigma \Delta \ut (\zt^*_{\mathtt{EE(\sigma)}})}}{(1 + e^{\sigma \Delta \ut(\zt^*_{\mathtt{EE(\sigma)}})})^2} \cdot \frac{\partial \Delta \ut (\zt^*_{\mathtt{EE(\sigma)}})}{\partial x} - 1,    
    \end{align*}
    which is strictly negative.

    We now compute the determinant as
    \begin{small}
    \begin{align}
        \mathtt{Det}[\mathcal{J}] &= \beta (1 - x^*_{\mathtt{EE(\sigma)}}) \It^*_{\mathtt{EE(\sigma)}} \bigg(1 + \frac{\sigma e^{\sigma \Delta \ut (\zt^*_{\mathtt{EE(\sigma)}})}}{(1 + e^{\sigma \Delta \ut (\zt^*_{\mathtt{EE(\sigma)}})})^2} \nonumber
        \\ &\quad \times\frac{\partial \Delta \ut(\zt^*_{\mathtt{EE(\sigma)}})}{\partial x} \bigg) - \beta \It^*_{\mathtt{EE(\sigma)}} (1 - \It^*_{\mathtt{EE(\sigma)}}) \nonumber
        \\ &\quad \times\frac{\sigma e^{\sigma \Delta \ut (\zt^*_{\mathtt{EE(\sigma)}})}}{(1 + e^{\sigma \Delta \ut (\zt^*_{\mathtt{EE(\sigma)}})})^2}  \cdot \frac{\partial \Delta \ut(\zt^*_{\mathtt{EE(\sigma)}})}{\partial\It} \nonumber 
        \\ &= \beta (1 - x^*_{\mathtt{EE(\sigma)}}) \It^*_{\mathtt{EE(\sigma)}} \bigg(1 + \frac{\sigma e^{\sigma \Delta \ut (\zt^*_{\mathtt{EE(\sigma)}})}}{(1 + e^{\sigma \Delta \ut (\zt^*_{\mathtt{EE(\sigma)}})})^2} \nonumber
        \\ &\quad \times \big(\frac{\partial \Delta \ut(\zt^*_{\mathtt{EE(\sigma)}})}{\partial x} - \frac{\gamma}{\beta (1 - x^*_{\mathtt{EE(\sigma)}})^2} \cdot \frac{\partial \Delta \ut(\zt^*_{\mathtt{EE(\sigma)}})}{\partial\It} \big) \bigg),
        \label{eq:det_eq_1}
    \end{align}
    \end{small}
    where the last equality is obtained by substituting $1 - \It^*_{\mathtt{EE(\sigma)}} = \frac{\gamma}{\beta (1 - x^*_{\mathtt{EE(\sigma)}})}$. For any $x$ along the endemic branch, we expand $\frac{d \Delta \ut (\zt^*_{\mathtt{EE(\sigma)}})}{d x}$ using chain rule to obtain
    \begin{small}
    \begin{align}
        \frac{d \Delta \ut (\zt^*_{\mathtt{EE(\sigma)}})}{d x} &= \frac{\partial \Delta \ut (\zt^*_{\mathtt{EE(\sigma)}})}{\partial x} \nonumber
        \\ &\quad + \frac{\partial \Delta \ut (\zt^*_{\mathtt{EE(\sigma)}})}{\partial \It} \cdot \frac{d }{d x} \big(1 - \frac{\gamma}{\beta (1 - x)}\big) \Bigg|_{x = x^*_{\mathtt{EE(\sigma)}}} \nonumber
        \\ &= \frac{\partial \Delta \ut(\zt^*_{\mathtt{EE(\sigma)}})}{\partial x} - \frac{\gamma}{\beta (1 - x^*_{\mathtt{EE(\sigma)}})^2} \cdot \frac{\partial \Delta \ut(\zt^*_{\mathtt{EE(\sigma)}})}{\partial\It}. \nonumber
    \end{align}
    \end{small}
    Substituting above equality in \eqref{eq:det_eq_1} simplifies to
    \begin{small}
    $$\mathtt{Det}[\mathcal{J}] \!\! = \!\! \beta (\!1 \! - \! x^*_{\mathtt{EE(\sigma)}}) \It^*_{\mathtt{EE(\sigma)}} \bigg(\!\!1 \!+ \frac{\sigma e^{\sigma \Delta \ut (\zt^*_{\mathtt{EE(\sigma)}})}}{(1 \! + \! e^{\sigma \Delta \ut (\zt^*_{\mathtt{EE(\sigma)}})})^2} \cdot \frac{d \Delta \ut (\zt^*_{\mathtt{EE(\sigma)}})}{d x} \!\bigg),$$
    \end{small}
    which is strictly positive, since $\frac{d \Delta \ut (\zt^*_{\mathtt{EE(\sigma)}})}{d x} \geq 0$ under the uniqueness condition \eqref{eq:min_suff_cond_x^*=x^*_sigmaEE} (see proof of Lemma \ref{lemm:x^*=x^*_ee}). Thus, the finite-$\sigma$ endemic SLE $\zt^*_{\mathtt{EE(\sigma)}}$ is LAS.
\end{proof}

LAS of corner endemic SLE $\zt^*_{\mathtt{EE(cor)}} = (0, 1 - \frac{\gamma}{\beta})$ follows as a limiting case when $\sigma \rightarrow \infty$. To this end, we state the following corollary.
\begin{corollary}
    Suppose $\mathcal{R}_0 > 1$ and inequality \eqref{eq:delta2} hold. Then in the fully rational case of $\sigma \rightarrow \infty$, the corner endemic SLE $\zt^*_{\mathtt{EE(cor)}} = (0, 1 - \frac{\gamma}{\beta})$ is LAS. If condition \eqref{eq:min_suff_cond_x^*=0} is also satisfied, then $\zt^*_{\mathtt{EE(cor)}}$ is the unique SLE that is LAS.
    \label{thm:x^*=0_stable}
\end{corollary}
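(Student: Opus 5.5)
We treat the fully rational case as a limit of the logit dynamics, so everything hinges on how the Jacobian entries \eqref{eq:xdot_i}--\eqref{eq:xdot_x} behave as $\sigma \to \infty$. The first step is existence. By Lemma \ref{lemm:x^*=0}, under $\mathcal{R}_0>1$ and \eqref{eq:delta2} the point $\zt^*_{\mathtt{EE(cor)}}=(0,1-\frac{\gamma}{\beta})$ is an SLE. Substituting $x=0$ and $\It=1-\frac{\gamma}{\beta}$ into \eqref{eq:diff_payoff} gives
\[
\Delta\ut(\zt^*_{\mathtt{EE(cor)}})=\tfrac{\gamma}{\beta}\delta_{PS1}+\big(1-\tfrac{\gamma}{\beta}\big)\delta_{PS0}=\tfrac{\gamma}{\beta}\Big(\delta_{PS1}+\big(\tfrac{\beta}{\gamma}-1\big)\delta_{PS0}\Big)>0 .
\]
Call this value $c>0$. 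By continuity of \eqref{eq:diff_payoff}, $\Delta\ut\geq c/2$ on a neighborhood $N$ of the corner SLE.

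The second step is the Jacobian. The key observation is that the logistic-derivative factor satisfies
\[
\frac{\sigma e^{\sigma\Delta\ut}}{(1+e^{\sigma\Delta\ut})^2}\leq \sigma e^{-\sigma c/2}
\]
on $N$, and the right-hand side tends to $0$ as $\sigma\to\infty$. Consequently \eqref{eq:xdot_i} vanishes and \eqref{eq:xdot_x} tends to $-1$ uniformly on $N$. In the limit the Jacobian at the corner point is upper triangular. Its diagonal entries are $-1$ and $\frac{\partial \dot{\It}}{\partial \It}=\beta(1-2\It^*)-\gamma$, which evaluates to $-(\beta-\gamma)<0$ at $\It^*=1-\frac{\gamma}{\beta}$, matching \eqref{eq:x*ee_idot_i} with $x^*=0$. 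Both eigenvalues are therefore strictly negative, and the corner SLE is LAS.

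For the formal statement, I would rather argue directly on the limiting vector field on $N$. There $\dot x=-x$, the best response is $x=0$, and $\dot{\It}=\beta\It(1-\It)-\gamma\It$. This limiting system is smooth on $N$ and has the Jacobian computed above. Alternatively, one can note that for all large $\sigma$ the logit system on $N$ is a $C^1$-small perturbation of this limiting system. Hyperbolicity then yields a nearby LAS equilibrium, namely $\zt^*_{\mathtt{EE(\sigma)}}$, converging to the corner point, which is consistent with Theorem \ref{thm:x^*=x^*ee_stable}.

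The third step is uniqueness under \eqref{eq:min_suff_cond_x^*=0}. Lemma \ref{lemm:x^*=0} already gives that the corner point is the only endemic SLE, so it remains to exclude the disease-free SLE. By Proposition \ref{rem:x^*_char_sigma=infty} and Remark \ref{rem:X^*=0_dominant_I^*=0}, the only fully rational disease-free SLE is $(0,0)$. At $(0,0)$ the $\It$-eigenvalue is $\beta-\gamma>0$ because $\mathcal{R}_0>1$, so $(0,0)$ is unstable. Hence the corner SLE is the unique LAS SLE.

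The main obstacle is making the limiting linearization rigorous: at $\sigma=\infty$ the vector field is a discontinuous best response, so the argument depends on $\Delta\ut$ staying strictly positive on the neighborhood $N$. That positivity is exactly what \eqref{eq:delta2} provides.
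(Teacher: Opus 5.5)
Your proposal is correct and follows the paper's proof. Existence comes from Lemma~\ref{lemm:x^*=0}, the factor $\frac{\sigma e^{\sigma\Delta\ut}}{(1+e^{\sigma\Delta\ut})^2}\to 0$ because $\Delta\ut(\zt^*_{\mathtt{EE(cor)}})>0$ by \eqref{eq:delta2}, and the limiting Jacobian is upper triangular with eigenvalues $\gamma-\beta$ and $-1$. Two of your additions make the argument more complete than the paper's. The first is the uniform bound on a neighborhood, which turns the limit into a genuinely smooth vector field $\dot x=-x$, $\dot{\It}=\beta(1-x)\It(1-\It)-\gamma\It$; you dropped the factor $(1-x)$ there, which is harmless for the eigenvalues. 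The second is your explicit exclusion of the disease-free SLE $(0,0)$ through its eigenvalue $\beta-\gamma>0$, whereas the paper's proof only appeals to uniqueness among endemic SLE.
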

\begin{proof}
    The proof is presented in the Appendix.
\end{proof}

As previously discussed in the characterization \eqref{eq:sigma<<infyFPP} in Proposition \ref{prop:finite_sigma_x^*}, at any $\sigma < \infty$ the equilibrium $x^*_{\mathtt{EE(cor)}} = 0$ no longer exists and $x^*_{\mathtt{EE(\sigma)}} > 0$ emerges as the finite-$\sigma$ equilibrium strategy.
Note that $x^*_{\mathtt{EE(cor)}} = 0$ implies that all individuals defect, and the infection is at its highest level. Since this is not desirable, we derive an analytical expression for $\sigma$, such that the cooperation fraction reaches a suitable value at steady-state of $0 < x^*_{\mathtt{EE(\sigma)}} = \epsilon < 1 - \frac{\gamma}{\beta}$, and the infection reduces to $\It^*_{\mathtt{EE(\sigma)}} = 1 - \frac{\gamma}{\beta (1 - \epsilon)} < 1 - \frac{\gamma}{\beta}$. Substituting $x^*_{\mathtt{EE(\sigma)}} = \epsilon$ and $\It^*_{\mathtt{EE(\sigma)}} = 1 - \frac{\gamma}{\beta (1 - \epsilon)}$ in \eqref{eq:diff_payoff}, we obtain 
\begin{align}
    \Delta \ut&(x^*_{\mathtt{EE(\sigma)}} = \epsilon, \It^*_{\mathtt{EE(\sigma)}} = 1 - \frac{\gamma}{\beta (1 - \epsilon)}) \nonumber
    \\&= \frac{\gamma}{\beta (1 - \epsilon)} \times \big(\epsilon \delta_{TR1} + (1 - \epsilon) \delta_{PS1} \big) \nonumber
    \\ &\quad +\big(1 - \frac{\gamma}{\beta (1 - \epsilon)} \big) \times \big(\epsilon \delta_{TR0} + (1 - \epsilon) \delta_{PS0}\big).
    \label{eq:x^*=epsilon}
\end{align}
Substituting \eqref{eq:x^*=epsilon} and $x^*_{\mathtt{EE(\sigma)}} = \epsilon$ in \eqref{eq:logit_eq_pt} gives the expression of rationality parameter as 
\begin{equation}
    \sigma \! = \! \frac{1}{\Delta \ut(x^*_{\mathtt{EE(\sigma)}} = \epsilon, \It^*_{\mathtt{EE(\sigma)}} = 1 - \frac{\gamma}{\beta (1 - \epsilon)})} \! \! \times \mathtt{ln} \big(\frac{1}{\epsilon} - 1 \big) =: \sigma^{\epsilon}.
    \label{eq:sigma^th_2}
\end{equation}
It is interesting to note that a change in the rationality parameter causes the fraction of the population that cooperates to change, resulting in a reduced infection level. A value of $\sigma =\sigma^{\epsilon}$ ensures that $(x^*_{\mathtt{EE(\sigma)}} = \epsilon, \It^*_{\mathtt{EE(\sigma)}} = 1 - \frac{\gamma}{\beta (1 - \epsilon)})$ is the finite-$\sigma$ SLE. An increase in $\sigma$ reduces $x^*_{\mathtt{EE(\sigma)}}$, and in the limiting case of $\sigma \rightarrow \infty$ the stable SLE that exists is $\zt^*_{\mathtt{EE(cor)}}$.

\begin{remark}
    We briefly highlight a case under $\sigma \rightarrow \infty$, such that, no endemic SLE exists. Consider that the equality
    \eqref{eq:nc_x^*=0.5} does not hold, i.e.,
    \[
        2 \gamma (\delta_{TR1} + \delta_{PS1})
        + (\beta - 2 \gamma) (\delta_{TR0} + \delta_{PS0}) \neq 0,
    \]
    and the payoffs satisfy
    \begin{align*}
        \Delta \ut\left(0,1-\frac{\gamma}{\beta}\right) &= \frac{\gamma}{\beta}(\delta_{PS1}-\delta_{PS0})
        +\delta_{PS0}<0,
        \\ \Delta \ut(1,0) &=\delta_{TR1}>0.
    \end{align*}
    Hence, neither boundary strategy is self-consistent: $x^*=0$ requires $\Delta \ut(x^*=0,\cdot)>0$, whereas $x^*=1$ requires $\Delta \ut(x^*=1,\cdot)<0$. Moreover, under complete rationality, an interior fixed point requires $x^*=0.5$ and $\Delta \ut(x^*,1-\frac{\gamma}{\beta(1-x^*)})=0$ to hold simultaneously. Since condition \eqref{eq:nc_x^*=0.5} does not hold, the latter condition is not satisfied at $x^*=0.5$. Therefore, no endemic SLE exists in the complete-rationality limit under these sets of conditions. This conclusion is specific to the pointwise limit convention, by Lemma \ref{lemm:x^*=x^*_ee}, an endemic SLE exists for every sufficiently large finite $\sigma$, and these SLE accumulate at states with $\Delta \ut = 0$ and $x^* \neq 0.5$, which are best-response equilibria but not SLE of the pointwise limit.
    \label{rem:del_U=0_unstable}
\end{remark}

In the following section, we validate some of our main analytical results using numerical simulations.

\section{Simulation Results}
\label{sec:num_sim}

In this section, we choose a set of parameters that satisfies $1 < \mathcal{R}_0 < 2$, and another set such that $\mathcal{R}_0 > 2$ is true. 
\begin{table}[htbp]
\caption{}
  \label{tab:table1}
  \centering
  \resizebox{8.6cm}{!}{
  \begin{tabular}{|c | c | c | c | c | c | c | c |} 
   \hline
   $\beta$ & $\gamma$ & $\It(0)$ & $x(0)$ & $\delta_{PS1}$ & $\delta_{TR1}$ & $\delta_{PS0}$ & $\delta_{TR0}$\\ [0.5ex] 
   \hline
   0.75 & 0.45 & 0.1 & 0.5 & 0.2 & 0.3 & -0.2 & 0.1\\ \hline
  \end{tabular}
  }
\end{table}
Note that the parameters in Table \ref{tab:table1} satisfy $\mathcal{R}_0 > 1$ (Lemmas \ref{lemm:x^*=x^*_ee} and \ref{lemm:x^*=0} and Corollary \ref{thm:x^*=0_stable}); $1 < \mathcal{R}_0 < 2$ (Theorem \ref{thm:dfe_stab}); Lemma \ref{lemm:x^*=x^*_ee} inequality \eqref{eq:min_suff_cond_x^*=x^*_sigmaEE}; Lemma \ref{lemm:x^*=0} inequalities \eqref{eq:delta2} and \eqref{eq:min_suff_cond_x^*=0}; $\delta_{TR1} \!\! \geq \!\! \delta_{PS1}$ (Theorem \ref{thm:dfe_stab}); Theorem \ref{thm:x^*=x^*ee_stable} inequality \eqref{eq:x^*=x^*ee_stability}.

First, we illustrate the LAS of the disease-free SLE $\zt^*_{\mathtt{DFE}}$ (Theorem \ref{thm:dfe_stab}) and finite-$\sigma$ endemic SLE $\zt^*_{\mathtt{EE(\sigma)}}$ (Theorem \ref{thm:x^*=x^*ee_stable}). For the selected set of parameters in Table \ref{tab:table1}, the threshold rationality $\sigma^{\mathtt{TH}}$ computed from \eqref{eq:sigma^th_1} in Theorem \ref{thm:dfe_stab} is $1.6894$. We consider two cases in Figure \ref{fig:sigma^th_2}: a) $\sigma =  \sigma^{\mathtt{TH}} + 3 = 4.6894$, and b) $\sigma =  \sigma^{\mathtt{TH}} - 1 = 0.6894$.

\begin{figure}[h!]
\centering
  \subfigure{\includegraphics[width=40mm]{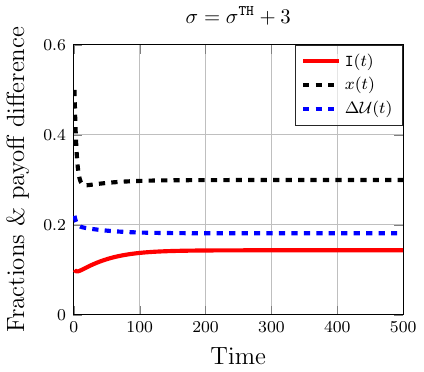}}
  \hspace{3mm}
  % \subfigure{\includegraphics[width=56mm]{figures/sigma2.pdf}}
  % \hspace{3mm}
  \subfigure{\includegraphics[width=40mm]{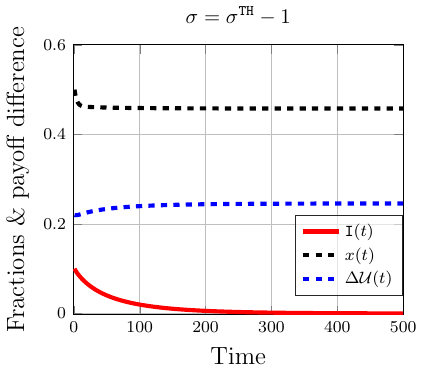}}
  \caption{Evolution of the infected fraction $\It(t)$, cooperation fraction $x(t)$, and payoff difference $\Delta \ut(t)$ (when $\sigma^{\mathtt{TH}} = 1.6894$ in Theorem \ref{thm:dfe_stab} holds) under $\sigma$ values of $4.6894$ (left), and $0.6894$ (right).}
  \label{fig:sigma^th_2}
\end{figure}
When $\sigma > \sigma^{\mathtt{TH}}$ holds, we observe that a stable finite-$\sigma$ endemic equilibrium $\zt^*_{\mathtt{EE(\sigma)}}$ exists (left plot of Figure \ref{fig:sigma^th_2}). This demonstrates that disease-free SLE is (locally) unstable. In the second case where $\sigma < \sigma^{\mathtt{TH}}$, the right plot of Figure \ref{fig:sigma^th_2} illustrates the LAS of the disease-free SLE $\zt^*_{\mathtt{DFE}}$. The above observations are consistent with the theoretical analysis in Theorems \ref{thm:dfe_stab} and \ref{thm:x^*=x^*ee_stable}.
\begin{figure}[h!]
\centering
  \subfigure{\includegraphics[width=28mm]{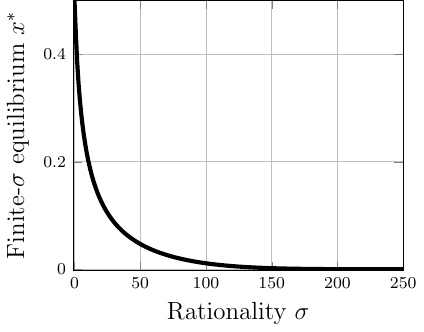}}
  %\hspace{1mm}
  \subfigure{\includegraphics[width=28mm]{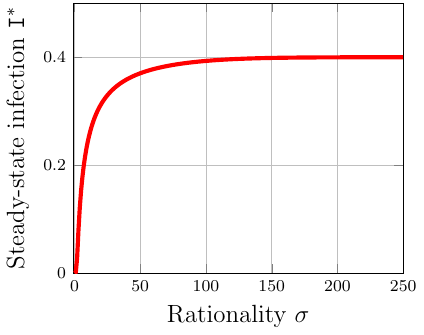}}
  %\hspace{1mm}
  \subfigure{\includegraphics[width=28mm]{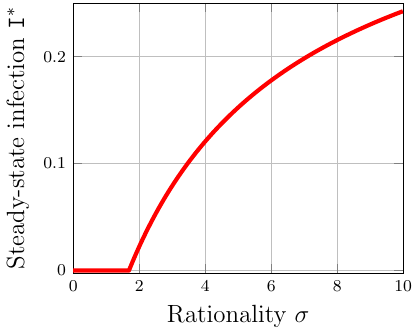}}
  \caption{Variation of fixed points $x^*$ (left), and $\It^*$ (middle) with $\sigma$. Variation of $\It^*$ (right) with $\sigma$ on a magnified scale.}
  \label{fig:var_sigma}
\end{figure}

\begin{figure*}[h!]
\centering
  \subfigure{\includegraphics[width=55mm]{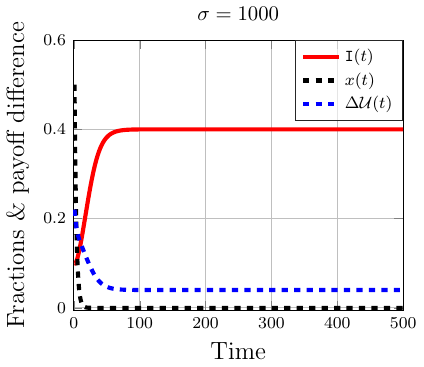}}
  \hspace{3mm}
  \subfigure{\includegraphics[width=56mm]{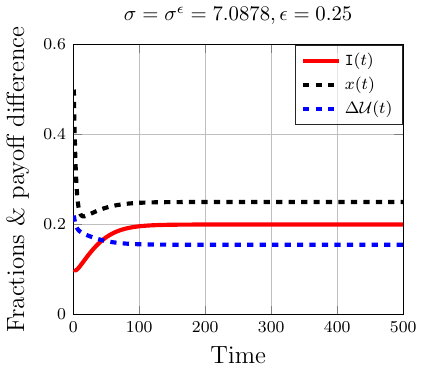}}
  \hspace{3mm}
  \subfigure{\includegraphics[width=56mm]{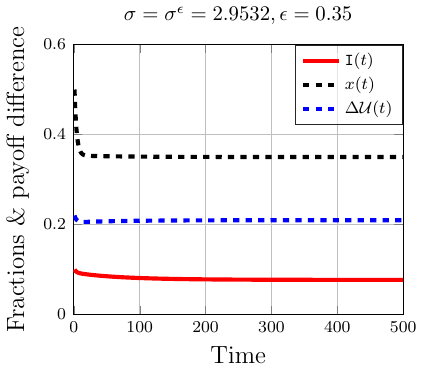}}
  \caption{Evolution of the infected fraction $\It(t)$, cooperation fraction $x(t)$, and payoff difference $\Delta \ut(t)$ under $\sigma$ values of $1000$ (left), $7.0878$ (middle), and $2.9532$ (right), when the parameters satisfy $R_1 = 0.3, S_1 = 0.2, T_1 = 0.6, P_1 = 0.4, R_0 = 0.5, S_0 = 0.4, T_0 = 0.6$ and $P_0 = 0.2$.}
  \label{fig:sigma^th_1}
\end{figure*}

\begin{figure*}[h!]
\centering
  \subfigure{\includegraphics[width=55mm]{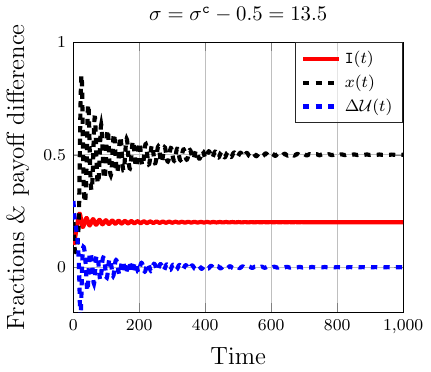}}
  \hspace{3mm}
  \subfigure{\includegraphics[width=56mm]{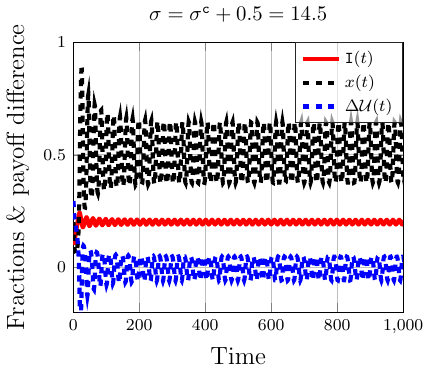}}
  \hspace{3mm}
  \subfigure{\includegraphics[width=56mm]{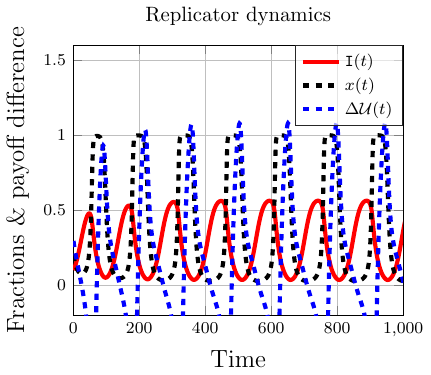}}
  \caption{Comparison of logit dynamics under $\sigma$ values of $13.5$ (left), and $14.5$ (middle), and replicator dynamics (right), when the parameters satisfy $R_1 = 1, S_1 = 1, T_1 = 2.5, P_1 = 1.25, R_0 = 8, S_0 = 2, T_0 = 1.25$ and $P_0 = 1.75$.}
  \label{fig:sigma+repl}
\end{figure*}

To illustrate the steady decrease (increase) of $x^*$ ($\It^*$, respectively) with $\sigma$ for the selected parameter set, we vary the rationality parameter from $0$ to $250$, and plot the SLE for each value of sigma. Figure \ref{fig:var_sigma} studies this variation. As expected, we observe that at $\sigma = 0$, exactly half of the population cooperates, leading to the disease-free equilibrium with $\It^*_{\mathtt{DFE}} = 0$. As rationality increases, the fraction of population cooperating falls, which in turn increases the infection level. Finally, at a very high rationality we observe that $x^*$ is close to zero and the steady-state infection is the highest. Thus, the transition is from SLE $\zt^*_{\mathtt{DFE}}$ to SLE $\zt^*_{\mathtt{EE(\sigma)}}$. At higher values of rationality, the finite-$\sigma$ SLE $\zt^*_{\mathtt{EE(\sigma)}}$ approaches the corner endemic SLE $\zt^*_{\mathtt{EE(cor)}}$.

To realize the transition from disease-free equilibrium to endemic equilibrium, we reduce the scale of $\sigma$ and magnify the plot in the middle panel. In the right plot of Figure \ref{fig:var_sigma}, we observe that the infected proportion is zero at $\sigma < 1.6894 = \sigma^{\mathtt{TH}}$. Beyond the threshold we find that steady-state infection is positive and disease-free SLE $\zt^*_{\mathtt{DFE}}$ becomes unstable (Theorems \ref{thm:dfe_stab} and \ref{thm:x^*=x^*ee_stable}).

In Figure \ref{fig:sigma^th_1}, we illustrate the trajectories of infection level $\It(t)$, proportion of population cooperating $x(t)$, and payoff difference $\Delta \ut (t)$ for different levels of rationality. Under the setting of high rationality, we show numerically in the left plot of Figure \ref{fig:sigma^th_1} that in the selected parameter regime, the trajectories converge to a stable finite-$\sigma$ SLE close to $\zt^*_{\mathtt{EE(cor)}} = (0, 1 - \frac{\gamma}{\beta})$. Since $\sigma = 1000$ is finite, SLE $\zt^*_{\mathtt{EE(\sigma)}}$ is not exactly $(0, 0.4)$, but in the limiting case of $\sigma \rightarrow \infty$, the stable SLE is $(0, 0.4)$ (Lemmas \ref{lemm:x^*=0} and \ref{lemm:x^*=x^*_ee}, Theorem 
\ref{thm:x^*=x^*ee_stable} and Corollary \ref{thm:x^*=0_stable}).

We further obtain the values of rationality from \eqref{eq:sigma^th_2} as $\sigma^{\epsilon}\bigg|_{\epsilon = 0.25} = 7.0878$ and $\sigma^{\epsilon}\bigg|_{\epsilon = 0.35} = 2.9532$ by setting $x^*_{\mathtt{EE(\sigma)}} = \epsilon = 0.25$ and $x^*_{\mathtt{EE(\sigma)}} = \epsilon = 0.35$, respectively. Evolution of the infected proportion and strategy for different values of $\epsilon$ are illustrated in the middle and right panels of Figure \ref{fig:sigma^th_1}. The value $\sigma^{\epsilon}$ is the rationality level for which the prescribed cooperation fraction $\epsilon$ constitutes the finite-$\sigma$ SLE $(\epsilon, 1 - \frac{\gamma}{\beta (1 - \epsilon)})$.

To illustrate the existence and LAS of the interior SLE $\zt^*_{\mathtt{EE(int)}}$, we choose the parameters according to Table \ref{tab:table2} for the next set of simulations.
\vspace{-1.5mm}
\begin{table}[htbp]
\caption{}
  \label{tab:table2}
  \centering
\resizebox{8.6cm}{!}{
\begin{tabular}{|c | c | c | c | c | c | c | c |} 
 \hline
 $\beta$ & $\gamma$ & $\It(0)$ & $x(0)$ & $\delta_{PS1}$ & $\delta_{TR1}$ & $\delta_{PS0}$ & $\delta_{TR0}$\\ [0.5ex] 
 \hline
 0.5 & 0.2 & 0.1 & 0.5 & 0.25 & 1.5 & -0.25 & -6.75\\ \hline
\end{tabular}
}
\end{table}
These chosen parameters satisfy the conditions \eqref{eq:nc_x^*=0.5} and \eqref{eq:cond1} in Lemma \ref{lemm:x^*=0.5_exist_unique}, and the inequalities $\mathcal{R}_{0} > 2$ (Lemma \ref{lemm:x^*=0.5_exist_unique} and Theorem \ref{thm:stab_x=0.5}), and $\mathcal{A} = -0.15 <  0$ (Theorem \ref{thm:stab_x=0.5}). The value of $\sigma^{\mathtt{c}}$ as derived in Theorem \ref{thm:stab_x=0.5} (equation \eqref{eq:sigma_c}) is $14$.
 
In Figure \ref{fig:sigma+repl}, we not only validate Lemma \ref{lemm:x^*=0.5_exist_unique} and Theorem \ref{thm:stab_x=0.5}, but also compare the trajectories when the learning protocol is governed by replicator dynamics. In the left plot of Figure \ref{fig:sigma+repl}, the value of rationality is lower than the threshold $\sigma^{\mathtt{c}}$. The  plot shows that the unique SLE $\zt^*_{\mathtt{EE(int)}} = (0.5, 0.2)$ is stable. When $\sigma$ exceeds the threshold (middle plot of Figure \ref{fig:sigma+repl}), we observe that $\zt^*_{\mathtt{EE(int)}}$ is no longer stable. It is interesting to note that the average values of $x(t)$ and $\Delta \ut(t)$ are $0.5$ and $0$, respectively. However, a high value of $\sigma$ results in the interior endemic being unstable. We also compare the trajectories and nature of the equilibrium when the strategy revision is according to replicator dynamics. Observe in the right plot of Figure \ref{fig:sigma+repl} that the trajectories do not converge to any steady-state equilibrium. Instead, the trajectories exhibit oscillations over the simulated time horizon. The average values of $x(t)$ and $\It(t)$ beyond $600$ units of time are around $0.39$ and $0.27$. Since the average fraction that cooperates decreases from $0.5$, we notice an increase in the average infection level when compared to logit dynamics. Thus, for the chosen parameter regime, bounded rationality stabilizes the equilibrium and yields a lower time-averaged infected proportion than the replicator dynamics.

\vspace{-0.4mm}
\section{Conclusions and Future Work}
\label{sec:conc}

In this paper we considered an SIS model in which the epidemic propagation and population behavior are coupled. We modeled individual rationality using logit dynamics, and characterized the resulting stationary logit equilibria. We identified certain parameter regimes in which lower rationality leads to a reduced infection level. We showed that in the regime of $1 < \mathcal{R}_0 < 2$, the disease-free equilibrium is unstable under replicator and high rationality logit dynamics but is LAS under low rationality logit learning. Through our numerical simulations we also illustrated that in certain parameter regimes, bounded rationality stabilizes the existing endemic equilibrium and converges to a lower infection level. This convergence is not always present when logit choice is replaced by replicator dynamics. 

While we have derived several conditions that guarantee the existence, uniqueness and LAS of the SLE, this paper does not present a complete characterization of the equilibria, and exploring the remaining parameter spaces remain an open challenge for future research. 
In this work, we have illustrated one example in which trajectories under replicator dynamics fail to converge, whereas, those under bounded rationality converge. A complete theoretical analysis on this comparison remains an interesting problem. 
Extension to other classes of epidemiological models without re-infection risk, and networked settings remain promising directions for future research.

\bibliographystyle{IEEEtran}
\bibliography{ref}

@article{kabir2020evolutionary,
  title={Evolutionary game theory modelling to represent the behavioural dynamics of economic shutdowns and shield immunity in the {COVID}-19 pandemic},
  author={Kabir, KM Ariful and Tanimoto, Jun},
  journal={Royal Society open science},
  volume={7},
  number={9},
  pages={201095},
  year={2020}
}

@article{ye2021game,
  title={Game-theoretic modeling of collective decision making during epidemics},
  author={Ye, Mengbin and Zino, Lorenzo and Rizzo, Alessandro and Cao, Ming},
  journal={Physical Review E},
  volume={104},
  number={2},
  pages={024314},
  year={2021},
  publisher={APS}
}

@article{chang2020game,
  title={Game theoretic modelling of infectious disease dynamics and intervention methods: a review},
  author={Chang, Sheryl L and Piraveenan, Mahendra and Pattison, Philippa and Prokopenko, Mikhail},
  journal={Journal of biological dynamics},
  volume={14},
  number={1},
  pages={57--89},
  year={2020},
  publisher={Taylor \& Francis}
}

@article{huang2022game,
  title={Game-theoretic frameworks for epidemic spreading and human decision-making: {A} review},
  author={Huang, Yunhan and Zhu, Quanyan},
  journal={Dynamic Games and Applications},
  volume={12},
  number={1},
  pages={7--48},
  year={2022},
  publisher={Springer}
}

@inproceedings{paarporn2023sis,
  title={S{IS} epidemics coupled with evolutionary social distancing dynamics},
  author={Paarporn, Keith and Eksin, Ceyhun},
  booktitle={2023 American Control Conference (ACC)},
  pages={4308--4313},
  year={2023},
  organization={IEEE}
}

@inproceedings{watkins2017inference,
  title={Inference, prediction and control of networked epidemics},
  author={Watkins, Nicholas J and Nowzari, Cameron and Pappas, George J},
  booktitle={2017 American Control Conference (ACC)},
  pages={5611--5616},
  year={2017},
  organization={IEEE}
}

@inproceedings{kohler2018dynamic,
  title={Dynamic resource allocation to control epidemic outbreaks a model predictive control approach},
  author={K{\"o}hler, Johannes and Enyioha, Chinwendu and Allg{\"o}wer, Frank},
  booktitle={2018 Annual American Control Conference (ACC)},
  pages={1546--1551},
  year={2018},
  organization={IEEE}
}

@article{watkins2019robust,
  title={Robust economic model predictive control of continuous-time epidemic processes},
  author={Watkins, Nicholas J and Nowzari, Cameron and Pappas, George J},
  journal={IEEE Transactions on Automatic Control},
  volume={65},
  number={3},
  pages={1116--1131},
  year={2019},
  publisher={IEEE}
}

@article{nowzari2015optimal,
  title={Optimal resource allocation for control of networked epidemic models},
  author={Nowzari, Cameron and Preciado, Victor M and Pappas, George J},
  journal={IEEE Transactions on Control of Network Systems},
  volume={4},
  number={2},
  pages={159--169},
  year={2015},
  publisher={IEEE}
}

@article{wei2023adoption,
  title={On the adoption of nonpharmaceutical interventions during the pandemic: An evolutionary game model},
  author={Wei, Zhiyuan and Zhuang, Jun},
  journal={Risk Analysis},
  volume={43},
  number={11},
  pages={2298--2311},
  year={2023},
  publisher={Wiley Online Library}
}

@article{maitra2023sis,
  title={S{IS} epidemic propagation under strategic non-myopic protection: {A} dynamic population game approach},
  author={Maitra, Urmee and Hota, Ashish R and Srivastava, Vaibhav},
  journal={IEEE Control Systems Letters},
  volume={7},
  pages={1578--1583},
  year={2023},
  publisher={IEEE}
}

@article{hota2019game,
  title={Game-theoretic vaccination against networked {SIS} epidemics and impacts of human decision-making},
  author={Hota, Ashish R and Sundaram, Shreyas},
  journal={IEEE Transactions on Control of Network Systems},
  volume={6},
  number={4},
  pages={1461--1472},
  year={2019},
  publisher={IEEE}
}

@inproceedings{pare2015stability,
  title={Stability analysis and control of virus spread over time-varying networks},
  author={Par{\'e}, Philip E and Beck, Carolyn L and Nedi{\'c}, Angelia},
  booktitle={2015 54th IEEE Conference on Decision and Control (CDC)},
  pages={3554--3559},
  year={2015},
  organization={IEEE}
}

@article{hota2023learning,
  title={Learning to mitigate epidemic risks: A dynamic population game approach},
  author={Hota, Ashish R and Maitra, Urmee and Elokda, Ezzat and Bolognani, Saverio},
  journal={Dynamic Games and Applications},
  volume={13},
  number={4},
  pages={1106--1129},
  year={2023},
  publisher={Springer}
}

@article{reluga2010game,
  title={Game theory of social distancing in response to an epidemic},
  author={Reluga, Timothy C},
  journal={PLoS computational biology},
  volume={6},
  number={5},
  pages={e1000793},
  year={2010},
  publisher={Public Library of Science San Francisco, USA}
}

@article{liu2022herd,
  title={Herd behaviors in epidemics: A dynamics-coupled evolutionary games approach},
  author={Liu, Shutian and Zhao, Yuhan and Zhu, Quanyan},
  journal={Dynamic Games and Applications},
  volume={12},
  number={1},
  pages={183--213},
  year={2022},
  publisher={Springer}
}

@book{sandholm2010population,
  title={Population games and evolutionary dynamics},
  author={Sandholm, William H},
  volume={25},
  year={2010},
  publisher={MIT press Cambridge}
}

@article{hofbauer2003evolutionary,
  title={Evolutionary game dynamics},
  author={Hofbauer, Josef and Sigmund, Karl},
  journal={Bulletin of the American mathematical society},
  volume={40},
  number={4},
  pages={479--519},
  year={2003}
}

@article{hofbauer2007evolution,
  title={Evolution in games with randomly disturbed payoffs},
  author={Hofbauer, Josef and Sandholm, William H},
  journal={Journal of economic theory},
  volume={132},
  number={1},
  pages={47--69},
  year={2007},
  publisher={Elsevier}
}

@INPROCEEDINGS{9992844,
  author={Olmez, S. Yagiz and Aggarwal, Shubham and Kim, Jin Won and Miehling, Erik and Başar, Tamer and West, Matthew and Mehta, Prashant G.},
  booktitle={2022 IEEE 61st Conference on Decision and Control (CDC)}, 
  title={How does a Rational Agent Act in an Epidemic?}, 
  year={2022},
  volume={},
  number={},
  pages={5536-5543},
  doi={10.1109/CDC51059.2022.9992844}}

@article{park2025epidemic,
  title={Epidemic population games and perturbed best response dynamics},
  author={Park, Shinkyu and Certorio, Jair and Martins, Nuno C and La, Richard J},
  journal={IEEE Transactions on Automatic Control},
  volume={70},
  number={9},
  pages={6036--6049},
  year={2025},
  publisher={IEEE}
}

@book{blanchini2008set,
  title={Set-theoretic Methods in Control},
  author={Blanchini, Franco and Miani, Stefano},
  edition={2},
  year={2015},
  publisher={Springer}
}

@ARTICLE{11499399,
  author={Gavin, R. and Paarporn, K. and Cao, M.},
  journal={IEEE Transactions on Automatic Control}, 
  title={Learning to Avoid Tragedies: Boundedly Rational Agents in Feedback-Evolving Games}, 
  year={2026},
  volume={},
  number={},
  pages={1-15},
  doi={10.1109/TAC.2026.3685548}}

@article{weitz2016oscillating,
  title={An oscillating tragedy of the commons in replicator dynamics with game-environment feedback},
  author={Weitz, Joshua S and Eksin, Ceyhun and Paarporn, Keith and Brown, Sam P and Ratcliff, William C},
  journal={Proceedings of the National Academy of Sciences},
  volume={113},
  number={47},
  pages={E7518--E7525},
  year={2016},
  publisher={National Academy of Sciences}
}

@article{fu2010imitation,
  title={Imitation dynamics of vaccination behaviour on social networks},
  author={Fu, Feng and Rosenbloom, Daniel I and Wang, Long and Nowak, Martin A},
  journal={Proceedings of the Royal Society B: Biological Sciences},
  volume={278},
  number={1702},
  pages={42},
  year={2010}
}

@inproceedings{satapathi2022epidemic,
  title={Epidemic propagation under evolutionary behavioral dynamics: Stability and bifurcation analysis},
  author={Satapathi, Abhisek and Dhar, Narendra Kumar and Hota, Ashish R and Srivastava, Vaibhav},
  booktitle={2022 American Control Conference (ACC)},
  pages={3662--3667},
  year={2022},
  organization={IEEE}
}

@ARTICLE{10933512,
  author={Cianfanelli, Leonardo and Como, Giacomo},
  journal={IEEE Transactions on Automatic Control}, 
  title={On the Stability of the Logit Dynamics in Population Games}, 
  year={2025},
  volume={70},
  number={9},
  pages={5910-5925},
  doi={10.1109/TAC.2025.3553096}}

@article{maitra2026discrete,
  title={Discrete-Time Sequential Persuasive Game for Containing {SIS} Epidemic Propagation},
  author={Maitra, Urmee and Hota, Ashish Ranjan},
  journal={Decision Analysis},
  note={Article in Advance},
  year={2026},
  doi={https://doi.org/10.1287/deca.2025.0496}
}

\appendix

\underline{\textbf{Proof of Lemma \ref{lemm:pos_inv}:}}

\begin{proof}
    To prove the positive invariance of $\Upgamma$ for any $(x(0), \It(0)) \in [0, 1]^2$, we analyze the states at the boundary faces. At any point $t \geq 0$, if $\It(t) = 0$, then $\dot{\It}(t) = 0$, and if $\It(t) = 1$, then $\dot{\It}(t) = -\gamma < 0$ holds. Similarly, $x(t) = 0$ implies $\dot{x}(t) = \frac{1}{1 + e^{\sigma \Delta \ut(t)}} \geq 0, \forall \sigma$, whereas, when $x(t) = 1$ holds, we obtain $\dot{x}(t) = \frac{1}{1 + e^{\sigma \Delta \ut(t)}} - 1 \leq 0, \forall \sigma$. For every finite $\sigma \geq 0$, the vector field does not point outward on the boundary of $\Upgamma$, and hence $\Upgamma$ is positively invariant by Nagumo's theorem \cite{blanchini2008set}.
\end{proof}

\underline{\textbf{Proof of Lemma \ref{lemm:x^*=0}:}}
\\\begin{proof}
    Note that reproduction number $\mathcal{R}_0 > 1$ implies that $\beta > \gamma$ holds, which is necessary to ensure that $\It^* = 1 - \frac{\gamma}{\beta}$ exists. Recall, from the characterization under complete rationality in \eqref{eq:sigma=infyFPP} that $$\Delta \ut(\zt^*_{\mathtt{EE(cor)}}) > 0 \implies x^*_{\mathtt{EE(cor)}} = 0, \It^*_{\mathtt{EE(cor)}} = 1 - \frac{\gamma}{\beta}.$$ Substituting these values in \eqref{eq:diff_payoff} results in $$\Delta \ut(\zt^*_{\mathtt{EE(cor)}}) = \frac{\gamma}{\beta} \bigg(\delta_{PS1} + \bigg(\frac{\beta}{\gamma} - 1 \bigg) \delta_{PS0}\bigg) > 0.$$

    \noindent Now assuming that \eqref{eq:delta2} holds, we derive the condition to ensure that SLE $\zt^*_{\mathtt{EE(cor)}}$ is unique. Rewriting \eqref{eq:diff_payoff},
    \begin{align}
    \Delta \ut(\zt^*) &= (1 - \It^*) x^* \delta_{TR1} + \It^* x^* \delta_{TR0} \nonumber
    \\ &\quad +(1 - \It^*) (1 - x^*) \delta_{PS1} + \It^* (1 - x^*) \delta_{PS0}, \nonumber
    \\ &= x^* \big( \It^* (\delta_{TR0} - \delta_{TR1}) + \delta_{TR1} \big) \nonumber
    \\ &\quad+ (1 \! - \! \It^*) (1 \! - \! x^*) (\delta_{PS1} \! - \! \delta_{PS0}) \! + \! (1 - x^*) \delta_{PS0} \nonumber
    \\ &> x^* \big( \It^* (\delta_{TR0} - \delta_{TR1}) + \delta_{TR1} \big) \label{eq:delta3}
    \\ &\quad-\frac{\gamma}{\beta (1 \! - \! x^* \!)} \!\!\times \!(1 \! - \! x^*) \! \! \times \! \! \frac{\beta}{\gamma} \! \times \! \delta_{PS0} \! + \! (1 \! - \! x^* \!)  \delta_{PS0}
    \nonumber
    %\\ &= x^* \big( \It^* (\delta_{TR0} - \delta_{TR1}) + \delta_{TR1} \big) - x^* \delta_{PS0} \nonumber
    \\ &= x^* \big( (1 - \It^*) \delta_{TR1} + \It^* \delta_{TR0} - \delta_{PS0} \big), 
    \label{eq:delta4}
\end{align}
where inequality \eqref{eq:delta3} is obtained by substituting $1 - \It^* = \frac{\gamma}{\beta (1 - x^*)}$ and then by bounding $\delta_{PS1} - \delta_{PS0}$ by the relation in \eqref{eq:delta2}. In order to guarantee that the R.H.S of inequality \eqref{eq:delta4} is non-negative, we derive the condition in the worst-case scenario. When $\delta_{TR0} \leq \delta_{TR1}$ holds, the quantity is non-increasing in $\It^*$ and the minimum occurs at the highest infection level. Since the highest infection level at steady-state is $\It^* = 1 - \frac{\gamma}{\beta}$, therefore, under inequality $\delta_{TR0} \leq \delta_{TR1}$, $\Delta \ut(\zt^*)$ is guaranteed to be non-negative when
\begin{equation}
     \beta (\delta_{TR0} - \delta_{PS0}) + \gamma (\delta_{TR1} - \delta_{TR0}) > 0
     \label{eq:del_tr0_del_tr1_del_ps0}
\end{equation}
holds.

On the other hand, when $\delta_{TR0} > \delta_{TR1}$ holds, the minimum occurs at the lowest infection level $\It^* = 0$. Hence, under inequality $\delta_{TR0} > \delta_{TR1}$, the lower bound of \eqref{eq:delta4} being strictly positive is
\begin{equation}
    \delta_{TR1} - \delta_{PS0} > 0.
    \label{eq:del_tr1_del_ps0}
\end{equation}
In each regime, the inequality corresponding to that condition ($\delta_{TR0} \leq \delta_{TR1}$, or $\delta_{TR0} > \delta_{TR1}$) is always tighter. Thus, the condition is written in compact form as inequality \eqref{eq:min_suff_cond_x^*=0}. Consequently, when \eqref{eq:min_suff_cond_x^*=0} is satisfied, $\Delta \ut(\zt^*)$ is strictly positive for all $x^* \in (0, 1 - \frac{\gamma}{\beta})$ along the endemic branch. At $x^*_{\mathtt{EE(cor)}} = 0$, condition \eqref{eq:delta2} guarantees that $\Delta \ut(0, 1 - \frac{\gamma}{\beta}) > 0$ holds. Thus, $\zt^*_{\mathtt{EE(cor)}} = (0, 1 - \frac{\gamma}{\beta})$ is the unique endemic SLE that exists under the conditions \eqref{eq:delta2} and \eqref{eq:min_suff_cond_x^*=0}.
\end{proof}

\underline{\textbf{Proof of Corollary \ref{thm:x^*=0_stable}:}}
\\\begin{proof}
    From Lemma \ref{lemm:x^*=0}, under the conditions of $\mathcal{R}_0 > 1$ and \eqref{eq:delta2}, the SLE $\zt^*_{\mathtt{EE(cor)}}$ exists. When the payoff difference $\Delta \ut(\zt^*)$ is strictly positive, in the limiting case we observe $$\sigma \rightarrow \infty \implies \frac{\sigma e^{\sigma \Delta \ut(\zt^*_{\mathtt{EE(cor)}})}}{(1 + e^{\sigma \Delta \ut(\zt^*_{\mathtt{EE(cor)}})})^2} \rightarrow 0.$$ Thus, in the limiting case when we substitute $\zt^*_{\mathtt{EE(cor)}} = (0, 1 - \frac{\gamma}{\beta})$ in equations \eqref{eq:idot_i}-\eqref{eq:xdot_x}, we obtain
    \begin{align*}
        &\frac{\partial \dot{\It}(\zt^*_{\mathtt{EE(cor)}})}{\partial \It} = \gamma - \beta < 0,
        \\ &\frac{\partial \dot{\It}(\zt^*_{\mathtt{EE(cor)}})}{\partial x} = -\frac{\gamma}{\beta} (\beta - \gamma) < 0,
        \\ &\frac{\partial \dot{x}(\zt^*_{\mathtt{EE(cor)}})}{\partial \It} = 0,
        \\ &\frac{\partial \dot{x}(\zt^*_{\mathtt{EE(cor)}})}{\partial x} = -1 < 0.
    \end{align*}
The Jacobian $\mathcal{J}(\zt^*_{\mathtt{EE(cor)}})$ is an upper triangular matrix with eigenvalues $\frac{\partial \dot{\It}(\zt^*_{\mathtt{EE(cor)}})}{\partial \It} < 0$ and $\frac{\partial \dot{x}(\zt^*_{\mathtt{EE(cor)}})}{\partial x} < 0$. Thus, we conclude that under complete rationality the unique endemic SLE $\zt^*_{\mathtt{EE(cor)}}$ is LAS.
\end{proof}

\end{document}